\documentclass{article}
\parindent0mm\textheight20cm \textwidth16.5cm\oddsidemargin-10mm \topmargin-1cm

\usepackage{graphicx,graphics,bm,epstopdf}
\usepackage{amsmath,amsfonts,amssymb,hyperref,verbatim}
\usepackage{natbib}
\usepackage[para,online,flushleft]{threeparttable}

\newcommand\dsum{\displaystyle\sum}

\newtheorem{corollary}{Corollary}
\newtheorem{lemma}{Lemma}

\newtheorem{proof}{Proof}

\renewcommand{\thesection}{\arabic{section}}
\renewcommand{\theequation}{\arabic{section}.\arabic{equation}}

  \begin{document}

\title{A computationally efficient correlated mixed Probit for credit risk modelling}

\author{Elisa Tosetti\\[4pt]
	\textit{Business School, Brunel University London, Uxbridge UB8 3PH, United Kingdom}\\[4pt]	
	Veronica Vinciotti\\[4pt]
	\textit{Department of Mathematics, Brunel University London, Uxbridge UB8 3PH, United Kingdom}\\[2pt]
     {veronica.vinciotti@brunel.ac.uk}	}

\maketitle

  \begin{abstract}
Mixed Probit models are widely applied in many fields where prediction of a binary
response is of interest. Typically, the random effects are assumed to be independent but this is seldom the case for many real applications. In the credit
risk application considered in this paper, random effects are present at the level of industrial sectors and they are expected to be correlated due to inter-firm credit links inducing dependencies in the
firms' risk to default. Unfortunately, existing inferential procedures for correlated mixed Probit models are computationally very intensive already for a moderate number of effects. Borrowing from the literature on
large network inference, we propose an efficient Expectation-Maximization algorithm for
unconstrained and penalised likelihood estimation and derive the asymptotic standard errors
of the estimates. An extensive simulation study shows that the proposed approach enjoys substantial computational gains
relative to standard Monte Carlo approaches, while still providing accurate parameter estimates. Using data on nearly 64,000 accounts for small and
medium-sized enterprises in the United Kingdom in 2013 across 14 industrial sectors, we find that accounting
for network effects via a correlated mixed Probit model increases significantly the default
prediction power of the model compared to conventional default prediction models, making efficient inferential procedures for these models particularly useful in this field.

Keywords: Mixed Probit, Graphical modelling, EM algorithm, Credit risk modelling.
\end{abstract}

\section{Introduction}

Discrete choice models with correlated group-specific random effects have a
wide applicability and practical importance in economics and the social
sciences, as they allow to accommodate for unobserved heterogeneity,
over-dispersion, intra- as well as inter-cluster correlation across binary
outcomes. In this paper, we consider the prediction of a firm's risk to default, whereby group random effects  at the level of industrial sectors are to be expected, and, at the same time,  dependencies between and within the industrial sectors are also to be expected due to inter-firm credit links.

Unfortunately, the presence of correlated random effects in mixed models poses
substantial computational challenges, with maximum
likelihood estimation typically requiring the evaluation of a
high-dimensional integral. To overcome these numerical difficulties, various
methods have been proposed in the literature that approximate the likelihood
by Gauss-Hermite quadrature or Monte Carlo integration and then maximize it
by either Newton-Raphson or Expectation-Maximisation algorithms (\citealp{Breslow1993,Schilling2005}). Despite achieving a
computational gain, these methods can still be applied only in the presence
of a limited number of groups because the number of evaluation points in the
Gauss-Hermite quadrature increases exponentially with the number of random
effects. In addition, these approximate Maximum Likelihood (ML) estimates
have been proved to be inconsistent under various conditions, with an
asymptotic bias that can be severe if the variance components are not small (\citealp{Breslow1995}).

An alternative, widely used, approach for estimating mixed models for binary
variables combines Monte Carlo integration with various
Expectation-Maximisation (EM) algorithms, leading to the so-called Monte
Carlo EM algorithm (see, among others, \citealp{Ashford1970,Chib1998,McCulloch1997,Gueorguieva2001}). For the case of a
mixed Probit model with independent random effects, \citet{McCulloch1994}
proposes Monte Carlo versions of the EM algorithm for ML\ estimation based
on the Gibbs sampling. This approach has been extended by
\citet{McCulloch1997} to the more general case of generalised linear mixed
models, by considering a Metropolis-Hastings algorithm at each E-step of the
ML estimation. Similarly, for the case of a mixed Probit model with
correlated random effects, \citet{Chan1997} propose an EM algorithm where
the E-step is made feasible by Gibbs sampling. 
The proposed approach however is
computationally very intensive, as it requires sampling from a multivariate
truncated Normal distribution. In order to deal with this problem,
\citet{Tan2007} propose a non-iterative importance sampling approach to
evaluate the first and the second order moments of a truncated multivariate
normal distribution associated with the Monte Carlo EM algorithm. An
alternative, direct sampling-based, EM algorithm is advanced by
\citet{An2012}, who propose to draw random samples from the prior
Gaussian distribution of random effects. This is computationally easier
than from the posterior distribution, but at the
expense of a higher Monte Carlo error. One limitation of the above Monte
Carlo EM algorithms is that, by combining Monte Carlo simulation with
iterative procedures, they are still computationally very expensive. The
estimation involved in the E-step of the Monte Carlo EM algorithm can
require a prohibitively large amount of time for a large number of
statistical units and already a moderate number of random effects.

In this paper, motivated by a large credit risk application, we propose an EM algorithm for estimation of a mixed Probit
model with correlated random effects that can be adopted for estimation and
prediction from very large data sets and a large number of random effects.
The proposed algorithm relies on efficient approximations of conditional
expectations that simplify the calculation of the moments of a truncated
normal distribution and avoid computationally demanding sampling methods.
Similar approximations have been adopted in the context of graphical models for ordinal (\citealp{Guo2015,Behrouzi2018}) and censored (\citealp{Augugliaro2018}) data but they have not been used in a regression context before. Similar to those approaches, we also propose a penalised version of the likelihood estimator, by applying the graphical lasso approach
(\citealp{Friedman2008}) within the proposed EM algorithm.
Beyond point-wise estimation, standard errors of maximum likelihood
estimates in the context of mixed-effects models are also typically obtained
by time consuming re-sampling approaches. In this paper we exploit the work
by \citet{Louis1982} to derive the observed Fisher information matrix of our
proposed mixed Probit model and thus to obtain the asymptotic standard
errors of the estimates. In doing this, we adopt results by %
\citet{Horrace2015} to calculate the third and fourth moments of univariate
truncated normals which appear in the observed Fished information matrix.  This paper provides a number of contributions to the existing literature.
First, we propose an extension of the literature on non-linear mixed models
to the case of correlated random effects, offering an inferential procedure
that allows to estimate unknown parameters and associated standard errors
also in the presence of very large samples. In doing this, we investigate
ways of overcoming serious computational difficulties that often arise in regression models with correlated binary responses. We also
show how penalised inferential procedures can be applied under this
framework, allowing to cover the case where the number of random effects
exceeds the number of observations.

An extensive simulation study assessing the properties and computational efficiency of our inferential procedure
 shows a good performance of
the proposed approach compared with existing ones. Using data on around 64,000 accounts of unlisted Small and Medium-sized
Enterprises (SMEs) based in the United Kingdom and observed in the year
2013, we find that incorporating inter-firm network dependencies in the form of correlated random effects increases the default prediction power of the credit risk
model compared to conventional ones. 
The remainder of the paper is structured
as follows. Section \ref{data} describes the empirical application on credit risk which motivates this study. Section \ref{sme} introduces our mixed graphical Probit model
and describes the EM algorithm for parameter estimation, with the proposed efficient approximations of the conditional expectations, the inference under penalised likelihood and the derivation of asymptotic standard errors,.
 Section \ref{montecarlo} carries out an extensive simulation study on the proposed method, and
Section \ref{results} describes the results of the proposed approach on real data. Finally, Section \ref{concl} provides some concluding remarks.

\section{Motivating example: credit risk modelling of SMEs } \label{data}
There is nowadays interest in creating default prediction models for Small and Medium-sized Enterprises (SMEs). Academic research into failure prediction has focused almost
exclusively on large companies, i.e. those which are listed on, and priced
by, the market, proposing a wide range of models and methods to assess and
quantify their risk of default.\ On the contrary, there has been a
relatively small number of prior academic studies examining default
prediction and credit scoring models with reference to small, private,
businesses, mostly due to the difficulty in obtaining sufficient and good
quality data in these contexts. These models are likely to be different to
those used for large corporates. For this reason, the recent Basel accords
are now directing the international credit system to pay closer attention to
measuring and managing credit risk of SMEs (\citealp{Sabato2010}).

When modelling credit risk for SMEs, one important feature to be considered
is the fact that companies are not simply independent agents competing for
customers on markets. They are linked by supply-costumer relationships. Some
firms may offer trade credit to other firms, thus establishing inter-firms
credit links (\citealp{Battiston2007}). Clearly, firms interact with each
others because they exchange items of value, such as information, goods,
services, and money. For example, the output of some firms (sub-contractors)
are input for some other firms. In addition, some firms may extend trade
credit to other firms, thus creating some sort of inter-firms credit links (%
\citealp{Battiston2007}). Interdependence amongst firms' default can also
arise because they share part of the management team and hence are subject
to similar investment decisions, or because firms react similarly to
external shocks such as a rise in the interest rate (\citealp{Andrews2003}).
Under this framework, the failure of a firm is likely to increase the
probability of failure of connected firms, giving raise to clustered
fluctuations in the number of failed firms.

Despite the importance of inter-firm links in determining firms'
performance, only few studies have looked at the role of interaction in
determining firms' default and clusters of default, with the majority of
these studies focusing on identifying the conditions under which local
failures can result in bankruptcies across the network (\citealp{Gatti2006}), or exploring whether firms that issue more trade
credit are more likely to experience a debtor failure (\citealp{Jacobson2013}). Yet fewer studies have considered incorporating
information on firms' interdependence into a default prediction model. Among
these, \citet{Barro2010} have proposed a model of contagion that associates
the economic relationship of sectors of the economy and the geographical
proximity of each pair of firms in a network of firms, whereas %
\citet{Barreto2013} have developed a measure of local risk of default using
ordinary kriging from data on 9 million Brazilian SMEs observed in 2010.
After including this measure as an additional explanatory variable in a
logistic credit scoring model, the authors showed that the performance of
the model improved considerably.

It is well known that the financial performance of companies are in part
driven by sector- and area-specific attributes, linked for example to
heterogeneity across industries in accounting policies or local trends in
demand (see, for example, \citealp{Kukuk2013}). For this reason, mixed
discrete choice models have been widely adopted to predict firm financial
distress for large corporations (see, among others, \citealp{Jones2004,Kukuk2013}), with few studies also specific to SMEs (see, for example %
\citealp{Alfo2005}). Differently from this literature and considering the
importance of inter-firm dependencies discussed above, in this paper, we
allow group effects to be correlated by assigning them a non-diagonal
covariance matrix. Under this framework, the dependence relationship of the
binary outcomes (default) is induced by the underlying Gaussian graphical
model on the random effects. In particular, we assume that the risk of
default for one company follows a Probit regression specification with
correlated group random effects, where groups are given by all companies
operating in the same sector of economic activity and located in the same
region.

We exploit a rich data set from a large financial institution covering around
64,000 accounts of unlisted firms based in the United Kingdom and observed
in the year 2013.\ These are companies that have no more than 250 employees,
a turnover smaller than \pounds 25.9 million, and a balance sheet total of
no more than \pounds 12.9 million. In line with other studies, we define
failure as entry into liquidation, administration or receivership. The
accounts analysed for failed companies are the last set of accounts filed in
the year prior to insolvency. The companies are spread over a total of 59 geographical areas, defined
using the NUTS3 classification, and across 14 broad sectors (divisions)\ of economic activity. In our model, the sectors will appear as random effects, whereas the geographical areas as the sampling units.

The data set contains a set of financial variables extracted from the
accounts of firms, as well as non-financial information, that are often
included in conventional default prediction models (see, among others, %
\citealp{Altman2007,Altman2010,Carling2007,Campbell2008,Jacobson2013}). In terms of firm-specific
financial variable, we include a set of financial ratios that cover the
areas of profitability, liquidity, leverage, coverage and activity (%
\citealp{Altman2007}). Profitability is the ability of the firm to generate
sufficient profits or returns, liquidity measures the ability of the firm to
meet its short-term obligations, leverage refers to the relative amount of
debt and other obligations of the firm, coverage is the risk inherent in
lending to the business in long-term, while activity is the level of
efficiency of a business. As for the non-financial indicators, we consider
variables linked to the age and size of the companies. We expect a higher
risk of default for newly formed companies that decreases with the age of
the company, and that is particularly high in the years immediately after an
initial \textquotedblleft honeymoon period\textquotedblright\ of around two
years. Finally, we have matched information on the postal district of the
trading address with data on latitude and longitude and other geographical
information extracted from the UK Office of National Statistics, to
calculate covariates at the aggregated level and account for systematic
risk. In particular, we include the NUTS3-level Gross Domestic Product, as a
proxy for the economic conditions of the area where the company operates.
Table \ref{tab1} lists the financial ratios included in our analysis grouped
according to the financial indicators and the non-financial ones, including company
characteristics and aggregate variables.

Table \ref{tab3} provides a set of descriptive statistics for the variables
included in our model, for failed and non-failed companies. As expected,
companies that failed have on average worse leverage and liquidity
indicators than firms that did not fail; they are smaller in
size and younger and more frequently fall in the age risk group. It is
interesting to observe that both trade debt and trade credit ratios have
higher values for defaulted companies. This result is supported by the
literature on trade credit which shows evidence that financially distressed
small companies not only have higher levels of trade debt supplied to
customers but also of trade credit obtained from suppliers (%
\citealp{CARBO-VALVERDE2016}).

\begin{table}
\caption{Credit risk data: definition of financial ratios, non-financial indicators and
aggregate variables. \label{tab1}}
\centering%
\begin{tabular}{|c|c|}
\hline
{\footnotesize Variable} & {\footnotesize Accounting ratio
category} \\ \hline
\multicolumn{2}{|c|}{\footnotesize Financial indicators} \\ \hline
\multicolumn{1}{|l|}{\footnotesize Total liabilities/total assets} &
{\footnotesize Leverage} \\
\multicolumn{1}{|l|}{\footnotesize Networth/total liabilities} &
{\footnotesize Leverage} \\
\multicolumn{1}{|l|}{\footnotesize Cash/total assets} & {\footnotesize %
Liquidity} \\
\multicolumn{1}{|l|}{\footnotesize Current liabilities/current assets} &
{\footnotesize Liquidity} \\
\multicolumn{1}{|l|}{\footnotesize Trade credit/total liabilities} &
{\footnotesize Liquidity} \\
\multicolumn{1}{|l|}{\footnotesize Trade debt/total assets} & {\footnotesize %
Liquidity} \\
\multicolumn{1}{|l|}{\footnotesize Retained profits/total assets} &
{\footnotesize Profitability} \\
\multicolumn{1}{|l|}{\footnotesize Account receivable/total liabilities} &
{\footnotesize Activity} \\ \hline
\multicolumn{2}{|c|}{\footnotesize Non-financial characteristics} \\ \hline
\multicolumn{1}{|l|}{\footnotesize Size} & \multicolumn{1}{|l|}%
{\footnotesize Total assets (logs)} \\
\multicolumn{1}{|l|}{\footnotesize Age} & \multicolumn{1}{|l|}{\footnotesize %
Age from the date of incorporation (logs)} \\
\multicolumn{1}{|l|}{\footnotesize Age risk} & \multicolumn{1}{|l|}{%
{\footnotesize 1 if }{\footnotesize 3 $\leq$ age $\leq$ 9}{\footnotesize \ years}} \\
\multicolumn{1}{|l|}{\footnotesize Local GDP} & \multicolumn{1}{|l|}%
{\footnotesize Gross Domestic Product in the NUTS3} \\ \hline
\end{tabular}
\end{table}%

\begin{table}
\caption{Credit risk data: descriptive statistics for non-failed and failed companies on
training sample. \label{tab3}}
\centering
\begin{tabular}{|c|c|c|cc|}
\hline
& \multicolumn{2}{|c|}{\footnotesize Non failed} & \multicolumn{2}{|c|}%
{\footnotesize Failed} \\ \hline
{\footnotesize Variable} & {\footnotesize Mean} & {\footnotesize Standard Error}
& {\footnotesize Mean} & \multicolumn{1}{|c|}{\footnotesize Standard Error} \\
\hline
\multicolumn{1}{|l|}{\footnotesize Total liabilities/total assets} &
\multicolumn{1}{|r|}{\footnotesize 0.817} & \multicolumn{1}{|r|}%
{\footnotesize 1.243} & \multicolumn{1}{|r}{\footnotesize 1.278} &
\multicolumn{1}{r|}{\footnotesize 1.851} \\
\multicolumn{1}{|l|}{\footnotesize Networth/total liabilities} &
\multicolumn{1}{|r|}{\footnotesize 6.315} & \multicolumn{1}{|r|}%
{\footnotesize 22.461} & \multicolumn{1}{|r}{\footnotesize 3.155} &
\multicolumn{1}{r|}{\footnotesize 15.173} \\
\multicolumn{1}{|l|}{\footnotesize Cash/total assets} & \multicolumn{1}{|r|}%
{\footnotesize 0.333} & \multicolumn{1}{|r|}{\footnotesize 0.348} &
\multicolumn{1}{|r}{\footnotesize 0.377} & \multicolumn{1}{r|}{\footnotesize %
0.380} \\
\multicolumn{1}{|l|}{\footnotesize Current liabilities/current assets} &
\multicolumn{1}{|r|}{\footnotesize 1.826} & \multicolumn{1}{|r|}%
{\footnotesize 5.283} & \multicolumn{1}{|r}{\footnotesize 2.386} &
\multicolumn{1}{r|}{\footnotesize 5.806} \\
\multicolumn{1}{|l|}{\footnotesize Trade credit/total liabilities} &
\multicolumn{1}{|r|}{\footnotesize 0.197} & \multicolumn{1}{|r|}%
{\footnotesize 0.302} & \multicolumn{1}{|r}{\footnotesize 0.225} &
\multicolumn{1}{r|}{\footnotesize 0.350} \\
\multicolumn{1}{|l|}{\footnotesize Trade debt/total assets} &
\multicolumn{1}{|r|}{\footnotesize 0.155} & \multicolumn{1}{|r|}%
{\footnotesize 0.231} & \multicolumn{1}{|r}{\footnotesize 0.162} &
\multicolumn{1}{r|}{\footnotesize 0.263} \\
\multicolumn{1}{|l|}{\footnotesize Retained profits/total assets} &
\multicolumn{1}{|r|}{\footnotesize -0.030} & \multicolumn{1}{|r|}%
{\footnotesize 0.594} & \multicolumn{1}{|r}{\footnotesize -0.216} &
\multicolumn{1}{r|}{\footnotesize 1.039} \\
\multicolumn{1}{|l|}{\footnotesize Account receivable/total liabilities} &
\multicolumn{1}{|r|}{\footnotesize 0.006} & \multicolumn{1}{|r|}%
{\footnotesize 0.029} & \multicolumn{1}{|r}{\footnotesize 0.004} &
\multicolumn{1}{r|}{\footnotesize 0.025} \\ \hline
\multicolumn{1}{|l|}{\footnotesize Size} & \multicolumn{1}{|r|}%
{\footnotesize 12.311} & \multicolumn{1}{|r|}{\footnotesize 2.899} &
\multicolumn{1}{|r}{\footnotesize 10.489} & \multicolumn{1}{r|}%
{\footnotesize 2.484} \\
\multicolumn{1}{|l|}{\footnotesize Age} & \multicolumn{1}{|r|}{\footnotesize %
2.382} & \multicolumn{1}{|r|}{\footnotesize 0.927} & \multicolumn{1}{|r}%
{\footnotesize 1.757} & \multicolumn{1}{r|}{\footnotesize 0.873} \\
\multicolumn{1}{|l|}{\footnotesize Age risk} & \multicolumn{1}{|r|}%
{\footnotesize 0.346} & \multicolumn{1}{|r|}{\footnotesize 0.476} &
\multicolumn{1}{|r}{\footnotesize 0.445} & \multicolumn{1}{r|}{\footnotesize %
0.497} \\
\multicolumn{1}{|l|}{\footnotesize Local GDP} & \multicolumn{1}{|r|}%
{\footnotesize 10.229} & \multicolumn{1}{|r|}{\footnotesize 0.447} &
\multicolumn{1}{|r}{\footnotesize 10.213} & \multicolumn{1}{r|}%
{\footnotesize 0.433} \\ \hline
\end{tabular}
\end{table}

In the next section, we formalise the proposed mixed Probit model with correlated random effects and describe an inferential procedure that is computationally efficient for data such as that described in this section, for which existing mixed probit models are prohibitively slow.

\section{Efficient mixed Probit model with correlated random effects} \label{sme}
\subsection{The model}
Consider a sample of $N_{r}$ companies located in region $r$, with$\ r=1,2,...,R$. Let $y_{ir}$
be the dichotomous variable equal to 1 when company
$i$ located in region $r$  defaults. Let $G$ be the number of industrial sectors. Using the latent response
model, we assume that $y_{ir}$ is generated by thresholding the latent
variable $y_{ir}^{\ast }$ that follows the Gaussian mixed model:%
\begin{eqnarray}
y_{ir}^{\ast } &=&\bm{\beta }^{\prime }\mathbf{x}_{ir}+\mathbf{z}%
_{ir}^{\prime }\mathbf{u}_{r}+\varepsilon _{ir},  \label{1} \\
y_{ir} &=&1\text{ if }y_{ir}^{\ast }\geq 0\text{, 0 otherwise,}  \notag
\end{eqnarray}%
where $\mathbf{x}_{ir}$ is a $K$-dimensional vector of explanatory
variables, $\bm{\beta }$ is a $K$-dimensional vector of unknown
parameters, $\mathbf{u}_{r}=\left( u_{1r},u_{2r},...,u_{Gr}\right) ^{\prime} $ is a
$G$-dimensional vector of Gaussian random errors with $\mathbf{z}_{ir} $ being a $G$-dimensional vector of (known) loadings,  and
$\varepsilon _{ir} $ are Gaussian random errors. We assume that\ $\mathbf{u}
_{r}$ and $\varepsilon _{ir}$ satisfy the following conditions for all $r$:
\begin{eqnarray*}
E\left( \varepsilon _{ir}\right) &=&0,E\left( \varepsilon _{ir}^{2}\right)
=1,\text{ for }i=1,2,...,N_{r}, \\
E\left( \varepsilon _{ir}\varepsilon _{js}\right) &=&0,\text{ for }i\neq
j=1,2,...,N_{r};r,s=1,2,...,R, \\
E\left( \mathbf{u}_{r}\mathbf{u}_{r}^{\prime }\right) &=&\mathbf{\Sigma }%
_{G}, \\
E\left( \mathbf{u}_{r}\mathbf{u}_{s}^{\prime }\right) &=&\mathbf{0},\text{
for }r\neq s, \\
E(\mathbf{u}_{r}\varepsilon _{is}) &=&\mathbf{0}\text{ for }r,s=1,2,...,R,
\end{eqnarray*}%
where $\mathbf{\Sigma }_{G}$ is a positive definite matrix with $\sigma
_{gh} $ the ($g,h$) off-diagonal element and $\sigma _{g}^{2}$ the $g$th
diagonal element. In stacked form model (\ref{1}) can be written as:%
\begin{equation*}
\mathbf{y}_{r}^{\ast }=\mathbf{X}_{r}\bm{\beta }+\mathbf{Z}_{r}\mathbf{u}%
_{r}+\bm{\varepsilon }_{r},
\end{equation*}%
where $\mathbf{y}_{r}^{\ast }=\left( y_{1r}^{\ast },y_{2r}^{\ast
},...,y_{N_{r},r}^{\ast }\right) ^{\prime }$, $\mathbf{X}_{r}=\left( \mathbf{%
x}_{1r},\mathbf{x}_{2r},...,\mathbf{x}_{Nr}\right) ^{\prime }$ $\bm{%
\varepsilon }_{r}=\left( \varepsilon _{1r},\varepsilon _{2r},...,\varepsilon
_{Nr}\right) ^{\prime }$ and $\mathbf{Z}_{r}$ is an $N_{r}\times G$ matrix.
In addition, $\mathbf{y}_{r}^{\ast }$ has covariance:%
\begin{equation}
\mathbf{\Sigma }_{r}=\mathbf{Z}_{r}\mathbf{\Sigma }_{G}\mathbf{Z}_{r}^{\prime }+%
\mathbf{I}_{N_{r}}.  \label{9}
\end{equation}%
The model above allows for group effects that vary across $R$ and $G$,
although the dependencies are only allowed across the $G$ dimension.
\subsection{Inference} \label{inference}
The interest is in estimating the regression parameters, $\bm{\beta }$,
as well as the dependence structure among the $G$ groups, given by the
elements of the precision matrix, $\mathbf{\Phi }_{G}=\mathbf{\Sigma }%
_{G}^{-1}$. As also remarked by the graphical modelling literature,
estimating the elements of the precision matrix allows to assess whether any
two units are conditionally independent given all other units (%
\citealp{Lauritzen1996}), thus providing a network of dependencies at the
level of random effects. Accordingly, let $\bm{\vartheta }=(\bm{%
\beta },vech(\mathbf{\Phi }_{G}))$ be the vector of unknown parameters in
the above model, and note that the observed data, $\mathbf{y}=\left( \mathbf{%
y}_{1}^{\prime },\mathbf{y}_{2}^{\prime },...,\mathbf{y}_{R}^{\prime
}\right) ^{\prime }$, is a function of the unobserved variables $\mathbf{y}%
^{\ast }=\left( \mathbf{y}_{1}^{\ast \prime },\mathbf{y}_{2}^{\ast \prime
},...,\mathbf{y}_{R}^{\ast \prime }\right) ^{\prime }$ and $\mathbf{u}%
=\left( \mathbf{u}_{1}^{\prime },\mathbf{u}_{2}^{\prime },...,\mathbf{u}%
_{R}^{\prime }\right) ^{\prime }$. The log-likelihood of the observed data
is given by:%
\begin{equation}
l(\bm{\vartheta })=\log \int f_{\mathbf{y,y}^{\ast },\mathbf{u}}\left(
\mathbf{y},\mathbf{y}^{\ast },\mathbf{u}|\bm{\vartheta }\right) d\mathbf{%
y}^{\ast }d\mathbf{u}.  \label{l}
\end{equation}%
The integral in (\ref{l}) makes it difficult to maximize $l(\bm{%
\vartheta })$ directly, so an EM algorithm for computing ML estimates can be
adopted, by maximizing the conditional expectation of the log-likelihood
function for the complete data given the observed data $\mathbf{y}$.
Treating $\mathbf{y}$, $\mathbf{y}^{\ast }$ and $\mathbf{u}$ as the complete
data, and $\mathbf{y}$ as the incomplete data, we have%
\begin{equation}
l(\bm{\vartheta })=\log f_{\mathbf{y},\mathbf{y}^{\ast },\mathbf{u}%
}\left( \mathbf{y},\mathbf{y}^{\ast },\mathbf{u}|\bm{\vartheta }\right)
-\log f_{\mathbf{y}^{\ast },\mathbf{u}|\mathbf{y}}\left( \mathbf{y}^{\ast },%
\mathbf{u}|\bm{y,\vartheta }\right) ,  \label{lfeta}
\end{equation}%
where $\log f_{\mathbf{y},\mathbf{y}^{\ast },\mathbf{u}}\left( \mathbf{y},%
\mathbf{y}^{\ast },\mathbf{u}|\bm{\vartheta }\right) $ is the
log-likelihood function for the complete data, namely%
\begin{gather}
\log f_{\mathbf{y},\mathbf{y}^{\ast },\mathbf{u}}\left( \mathbf{y},\mathbf{y}%
^{\ast },\mathbf{u}|\bm{\vartheta }\right) =\log \left[ f\left( \mathbf{u%
}\right) f\left( \mathbf{y}^{\ast },\mathbf{y}|\mathbf{u}\right) \right]
\notag \\
\approx \frac{R}{2}\ln \left\vert \mathbf{\Phi }_{G}\right\vert -\frac{1}{2}%
\sum_{r=1}^{R}\mathbf{u}_{r}^{\prime }\mathbf{\Phi }_{G}\mathbf{u}_{r}-\frac{%
1}{2}\sum_{r=1}^{R}\left( \mathbf{y}_{r}^{\ast }-\mathbf{X}_{r}\bm{\beta
}-\mathbf{Z}_{r}\mathbf{u}_{r}\right) ^{\prime }\left( \mathbf{y}_{r}^{\ast
}-\mathbf{X}_{r}\bm{\beta }-\mathbf{Z}_{r}\mathbf{u}_{r}\right).  \notag
\end{gather}%
Taking conditional expectations given $\mathbf{y}$ on both sides of (\ref%
{lfeta}) yields:%
\begin{eqnarray}
l(\bm{\vartheta }) &=&E\left[ \log f_{\mathbf{y},\mathbf{y}^{\ast },%
\mathbf{u}}\left( \mathbf{y},\mathbf{y}^{\ast },\mathbf{u}|\bm{\vartheta
}\right) |\mathbf{y}\right] -E\left[ \log f_{\mathbf{y}^{\ast },\mathbf{u}|%
\mathbf{y}}\left(\mathbf{y}^{\ast },\mathbf{u}|\bm{y,\vartheta }\right) |%
\mathbf{y}\right]  \label{QH} \\
&=&Q\left( \bm{\vartheta }\right) -H\left( \bm{\vartheta }\right) ,
\notag
\end{eqnarray}%
where%
\begin{eqnarray}
Q\left( \bm{\vartheta }\right) &\approx &\frac{R}{2}\ln \left\vert
\mathbf{\Phi }_{G}\right\vert -\frac{1}{2}{\rm Tr}\left\{ \mathbf{\Phi }_{G}\frac{1%
}{R}\sum_{r=1}^{R}E\left( \mathbf{u}_{r}\mathbf{u}_{r}^{\prime }\mathbf{|y}%
_{r}\right) \right\}  \label{Qfunction} \\
&-&\frac{1}{2}\sum_{r=1}^{R}E\left[ \left( \mathbf{y}_{r}^{\ast }-\mathbf{X}%
_{r}\bm{\beta }-\mathbf{Z}_{r}\mathbf{u}_{r}\right) ^{\prime }\left(
\mathbf{y}_{r}^{\ast }-\mathbf{X}_{r}\bm{\beta }-\mathbf{Z}_{r}\mathbf{u}%
_{r}\right) \mathbf{|y}_{r}\right] .  \notag
\end{eqnarray}

The $Q$ function is the main ingredient of the EM algorithm. Let $\bm{%
\hat{\vartheta}}^{(m)}$ denote the estimate of $\mathbf{\Theta } $ after the
$m$th iteration. Then the E and M steps of the $(m+1)$th iteration are given
by:
\begin{description}
\item[E-Step] (Expectation step) Compute $Q\left( \bm{\vartheta |\hat{%
\vartheta}}^{(m)}\right) =E\left[ \log f_{\mathbf{y},\mathbf{y}^{\ast },%
\mathbf{u}}\left( \mathbf{y},\mathbf{y}^{\ast },\mathbf{u}|\bm{\hat{%
\vartheta}}^{(m)}\right) |\mathbf{y}\right] $

\item[M-Step] (Maximisation step): Compute $\bm{\hat{\vartheta}}%
^{(m+1)}=\arg \max Q\left( \bm{\vartheta |\hat{\vartheta}}^{(m)}\right) $%
.
\end{description}
These steps are iterated until convergence is achieved. For $R>>G$, the
first-order conditions for $\bm{\beta }$ and $\mathbf{\Phi }_{G}$ in the
$M$-step are:%
\begin{eqnarray}
\bm{\hat{\beta}}^{(m+1)} &=&\left( \sum_{r=1}^{R}\mathbf{X}_{r}^{\prime }%
\mathbf{X}_{r}\right) ^{-1}\sum_{r=1}^{R}\mathbf{X}_{r}^{\prime }\left[
E\left( \mathbf{y}_{r}^{\ast }|\mathbf{y}_{r}\right) -\mathbf{Z}_{r}E\left(
\mathbf{u}_{r}\mathbf{|y}_{r}\right) \right] ,  \label{bhat1} \\
\mathbf{\Phi }_{G}^{(m+1)} &=&\left[ \frac{1}{R}\sum_{r=1}^{R}E\left(
\mathbf{u}_{r}\mathbf{u}_{r}^{\prime }\mathbf{|y}_{r}\right) \right] ^{-1}.
\label{teta1}
\end{eqnarray}%
Hence, the $M$-step alternates between estimation of $\bm{\beta }$
using (\ref{bhat1}) and estimation of $\mathbf{\Phi }_{G}$ using (\ref{teta1}%
). At each step, the new estimate of $\mathbf{\Phi }_{G}$ uses the previous
value of $\bm{\hat{\beta}}$\ and the new value of $\mathbf{\hat{\Phi}}%
_{G}$ is used to update $\bm{\hat{\beta}}$. \citet{Meng1993} showed that
iterating between these two equations in the EM algorithm provides
convergence to the true ML\ estimates. However, the above expressions depend
on the unknown quantities $E\left( \mathbf{u}_{r}\mathbf{|y}_{r}\right) $
and $E\left( \mathbf{u}_{r}\mathbf{u}_{r}^{\prime }\mathbf{|y}_{r}\right) $.
In the following, we propose an approximation of conditional expectations $%
E\left( \mathbf{u}_{r}\mathbf{|y}_{r}\right) $ and $E\left( \mathbf{u}_{r}%
\mathbf{u}_{r}^{\prime }\mathbf{|y}_{r}\right) $ and show how this can be
adopted to simplify the EM\ algorithm.

\subsection{Approximating conditional expectations} \label{EM}
Using the law of iterated expectations and the theorem on conditional
normals, $E\left( \mathbf{u}_{r}\mathbf{|y}_{r}\right) $ and $E\left(
\mathbf{u}_{r}\mathbf{u}_{r}^{\prime }\mathbf{|y}_{r}\right) $ are typically
calculated by
\begin{equation}
E\left( \mathbf{u}_{r}\mathbf{|y}_{r}\right) =\mathbf{\Sigma }_{G}\mathbf{Z}%
_{r}^{\prime }\mathbf{\Sigma }_{r}^{-1}\left[ E\left( \mathbf{y}_{r}^{\ast }|%
\mathbf{y}_{r}\right) -\mathbf{X}_{r}\bm{\beta }\right] ,  \label{Eu1}
\end{equation}%
\begin{eqnarray}
E\left( \mathbf{u}_{r}\mathbf{u}_{r}^{\prime }\mathbf{|y}_{r}\right) &=&%
\mathbf{\Sigma }_{G}\mathbf{Z}_{r}^{\prime }\mathbf{\Sigma }_{r}^{-1}E\left[
\left( \mathbf{y}_{r}^{\ast }-\mathbf{X}_{r}\bm{\beta }\right) \left(
\mathbf{y}_{r}^{\ast }-\mathbf{X}_{r}\bm{\beta }\right) ^{\prime }|%
\mathbf{y}_{r}\right] \mathbf{\Sigma }_{r}^{-1}\mathbf{Z}_{r}\mathbf{\Sigma }%
_{G}  \notag \\
&&+\mathbf{\Sigma }_{G}-\mathbf{\Sigma }_{G}\mathbf{Z}_{r}^{\prime }\mathbf{%
\Sigma }_{r}^{-1}\mathbf{Z}_{r}\mathbf{\Sigma }_{G},  \label{euugiveny}
\end{eqnarray}
following Appendix B and \citet{Chan1997}.

From the above expressions it is clear that $E\left( \mathbf{u}_{r}%
\mathbf{|y}_{r}\right) $ and $E\left( \mathbf{u}_{r}\mathbf{u}_{r}^{\prime }%
\mathbf{|y}_{r}\right) $ depend on the first two moments of a multivariate
truncated normal distribution, namely, $E\left( \mathbf{y}_{r}^{\ast }|%
\mathbf{y}_{r}\right) $ and $E\left[ \left( \mathbf{y}_{r}^{\ast }-\mathbf{X}%
_{r}\bm{\beta }\right) \left( \mathbf{y}_{r}^{\ast }-\mathbf{X}_{r}%
\bm{\beta }\right) ^{\prime }|\mathbf{y}_{r}\right] $.\ A number of
authors have proposed algorithms for direct estimation or approximation of
moments of multivariate truncated normal distributions (see, among others, %
\citealp{Tallis1961,Lee1979,Leppard1989}). Other authors
have proposed a Markov Chain Monte Carlo approach that consists of randomly
generating a sequence of samples from the multivariate truncated normal
distribution and then approximating the first two moments by the empirical
conditional moments from these samples (\citealp{Kotecha1999,Chan1997,Chib1998,Abegaz2015}). Although this method is faster than
direct estimation of the moments, it is still computationally very demanding
for large scale problems. A recent strand of literature has proposed to
approximate the first and second moments of a multivariate truncated normal
distribution through an iterative procedure within the M-step (\citealp{Guo2015,Behrouzi2018,Augugliaro2018}), leading to a computationally much
faster approach than any previous methods. Exploiting this literature, we
consider a mean field approximation of the second moments, namely for $i\neq
j$ and for all $r=1,2,...,R$:%
\begin{equation}
E\left[ \left( y_{ir}^{\ast }-\bm{\beta }^{\prime }\mathbf{x}%
_{ir}\right) \left( y_{jr}^{\ast }-\bm{\beta }^{\prime }\mathbf{x}%
_{jr}\right) |\mathbf{y}_{r}\right] \approx E\left[ \left( y_{ir}^{\ast }-%
\bm{\beta }^{\prime }\mathbf{x}_{ir}\right) |\mathbf{y}_{r}\right] E%
\left[ \left( y_{jr}^{\ast }-\mathbf{\beta }^{\prime }\mathbf{x}_{jr}\right)
|\mathbf{y}_{r}\right] .  \label{ass1}
\end{equation}%
Hence, 
once controlled for the observed values in $\mathbf{y}_{r}$ and the regressors $\mathbf{X}_{r}$, $%
y_{ir}^{\ast }$ and $y_{jr}^{\ast }$ become decoupled. In the simulation
section we will show good properties of our proposed estimator with that
based on the slower Monte Carlo EM procedures, that do not make the above
approximation (see Section \ref{montecarlo}). Under (\ref{ass1}), in order
to compute (\ref{Eu1})-(\ref{euugiveny}),
we only need to find $E\left( y_{ir}^{\ast }|\mathbf{y}_{r}\right) $\ and $%
E\left( y_{ir}^{\ast 2}|\mathbf{y}_{r}\right) $. To this end, first write
the first and second conditional moments as follows:%
\begin{eqnarray}
E\left( y_{ir}^{\ast }|\mathbf{y}_{r}\right) &=&E\left[ E\left( y_{ir}^{\ast
}|\mathbf{y}_{-i,r}^{\ast },y_{ir}\right) |\mathbf{y}_{r}\right] ,
\label{c1} \\
E\left( y_{ir}^{\ast 2}|\mathbf{y}_{r}\right) &=&E\left[ E\left(
y_{ir}^{\ast 2}|\mathbf{y}_{-i,r}^{\ast },y_{ir}\right) |\mathbf{y}_{r}%
\right] ,  \label{c2}
\end{eqnarray}%
where $\mathbf{y}_{-i,r}^{\ast }=(y_{1r}^{\ast },y_{2r}^{\ast
},...,y_{i-1,r}^{\ast },y_{i+1,r}^{\ast },...,y_{N_{r}r}^{\ast })^{\prime }$. Noting that $\mathbf{y}_{r}^{\ast }$ is a vector of jointly normal
variables with mean zero and covariance $\mathbf{\Sigma }_{r}$, and
exploiting the theorem on conditional normals, we obtain that the
conditional distribution of $y_{ir}^{\ast }$ given $\mathbf{y}_{-i,r}^{\ast
} $ has mean and variance respectively given by%
\begin{eqnarray*}
\tilde{\mu}_{ir} &=&\bm{\beta }^{\prime }\mathbf{x}_{ir}+\mathbf{\Sigma }%
_{r,i,-i}\mathbf{\Sigma }_{r,-i,-i}^{-1}\left( \mathbf{y}_{-i,r}^{\ast }-%
\mathbf{X}_{-i,r}\bm{\beta }\right) , \\
\tilde{\sigma}_{ir}^{2} &=&\sigma _{ir}^{2}-\mathbf{\Sigma }_{r,i,-i}\mathbf{%
\Sigma }_{r,-i,-i}^{-1}\mathbf{\Sigma }_{r,-i,i},
\end{eqnarray*}%
where $\sigma _{ir}^{2}$ is the ($i,i$)th element of $\mathbf{\Sigma }_{r}$.
Replacing the above expressions in the equation for the mean and second
moment of truncated normals (see Appendix A) 
we obtain the following expressions for the first conditional moment (\ref{c1}) and the second conditional moment (\ref{c2}):%
\begin{eqnarray}
E\left( y_{ir}^{\ast }-\bm{\beta }^{\prime }\mathbf{x}_{ir}|\mathbf{y}%
_{r}\right) &\!=\!&\mathbf{\Sigma }_{r,i,-i}\mathbf{\Sigma }_{r,-i,-i}^{-1}%
E\left( \mathbf{y}_{-i,r}^{\ast }-\mathbf{X}_{-i,r}\bm{\beta} |\mathbf{y}%
_{r}\right)
+\rho _{1,ir}\tilde{\sigma}_{ir},  \label{conde} \\
E\left[ \left( y_{ir}^{\ast }-\bm{\beta }^{\prime }\mathbf{x}%
_{ir}\right) ^{2}|\mathbf{y}_{r}\right] &\!=\!&\mathbf{\Sigma }_{r,i,-i}\mathbf{%
\Sigma }_{r,-i,-i}^{-1}E\left[ \left( \mathbf{y}_{-i,r}^{\ast }-\mathbf{X}%
_{-i,r}\bm{\beta }\right) \left( \mathbf{y}_{-i,r}^{\ast }-\mathbf{X}%
_{-i,r}\bm{\beta }\right) ^{\prime }|\mathbf{y}_{r}\right] \mathbf{%
\Sigma }_{r,-i,-i}^{-1}\mathbf{\Sigma }_{r,i,-i}  \notag \\
&+&\tilde{\sigma}_{ir}^{2}+2\rho _{1,ir}\tilde{\sigma}_{ir}\mathbf{\Sigma }%
_{r,i,-i}\mathbf{\Sigma }_{r,-i,-i}^{-1}E\left( \mathbf{y}_{-i,r}^{\ast }-%
\mathbf{X}_{-i,r}\mathbf{\beta} |\mathbf{y}_{r}\right) +\rho _{2,ir}\tilde{\sigma}_{ir}^{2}
\notag \\
&+&\left( \bm{\beta }^{\prime }\mathbf{x}_{ir}\right) ^{2}-2\bm{%
\beta }^{\prime }\mathbf{x}_{ir}E\left( y_{ir}^{\ast }|\mathbf{y}_{r}\right)
,  \label{conde2a}
\end{eqnarray}%
where $\rho _{1,ir}$ and $\rho _{2,ir}$ are defined in Appendix A. 
The above equations show that there exists a recursive
relationship between the elements in $E\left( y_{ir}^{\ast }-\bm{\beta }%
^{\prime }\mathbf{x}_{ir}|\mathbf{y}_{r}\right) $ and \\ $E\left[ \left(
\mathbf{y}_{r}^{\ast }-\mathbf{X}_{r}\bm{\beta }\right) \left( \mathbf{y}%
_{r}^{\ast }-\mathbf{X}_{r}\bm{\beta }\right) ^{\prime }|\mathbf{y}_{r}%
\right] $ and offer an iterative procedure for estimating these quantities.
More specifically: Let $E\left( y_{jr}^{\ast }-\bm{\beta }^{\prime }%
\mathbf{x}_{jr}|\mathbf{y}_{r}\right) ^{(h)}$ and $E\left[ \left(
y_{jr}^{\ast }-\bm{\beta }^{\prime }\mathbf{x}_{jr}\right) ^{2}|\mathbf{y%
}_{r}\right] ^{(h)}$, for all $j$, be the estimates of $E\left( y_{jr}^{\ast }-\bm{%
\beta }^{\prime }\mathbf{x}_{jr}|\mathbf{y}_{r}\right) $ and $E\left[ \left(
y_{jr}^{\ast }-\bm{\beta }^{\prime }\mathbf{x}_{jr}\right) ^{2}|\mathbf{y%
}_{r}\right] $, respectively, at the $h$th stage in the $M$-step. We plug these into
the right hand side of (\ref{conde})-(\ref{conde2a}) to compute new values of
$E\left( y_{ir}^{\ast }-\bm{\beta }^{\prime }\mathbf{x}_{ir}|\mathbf{y}%
_{r}\right) $ and $E\left[ \left( y_{ir}^{\ast }-\bm{\beta }^{\prime }%
\mathbf{x}_{ir}\right) ^{2}|\mathbf{y}_{r}\right] $ (inner iterations).
After convergence is reached, let $E\left( y_{ir}^{\ast }-\mathbf{%
\beta }^{\prime }\mathbf{x}_{ir}|\mathbf{y}_{r}\right) ^{(h)\ast }$ and $E%
\left[ \left( y_{ir}^{\ast }-\bm{\beta }^{\prime }\mathbf{x}_{ir}\right)
^{2}|\mathbf{y}_{r}\right] ^{(h)\ast }$ be the final estimates. We plug these into (\ref{bhat1}) to obtain a new estimate of $\bm{\beta }$ and to compute (\ref{euugiveny}) that enters in (\ref%
{teta1}) for estimation of $\mathbf{\Phi }_{G}$ (outer iterations).
With the
new $\bm{\beta }$ and $\mathbf{\Phi }_{G}$, we recompute $E\left(
y_{ir}^{\ast }-\bm{\beta }^{\prime }\mathbf{x}_{ir}|\mathbf{y}%
_{r}\right) $ and $E\left[ \left( y_{ir}^{\ast }-\bm{\beta }^{\prime }%
\mathbf{x}_{ir}\right) ^{2}|\mathbf{y}_{r}\right] $ ready for another round
of inner iterations.  Note however that convergence for the inner iterations
is not necessary; in fact, inner iterations can be reduced to a single round
of computation.

According to the iterative procedure just described, the matrix inverse, $\mathbf{%
\Sigma }_{r,-i,-i}^{-1}$, for $i=1,2,...,N_{r}$, needs to be computed at
each iteration of the EM procedure. Although the matrix can be rather large, given that it has size $(N_r-1) \times (N_r-1)$, a simplified expression can be obtained by noting that:%
\begin{equation*}
\mathbf{\Sigma }_{r,-i,-i}=\mathbf{Z}_{r,-i}\mathbf{\Sigma }_{G}\mathbf{Z}%
_{r,-i}^{\prime }+\mathbf{I}_{N_{r}-1},
\end{equation*}%
and, using the matrix inversion lemma:%
\begin{equation*}
\mathbf{\Sigma }_{r,-i,-i}^{-1}=\mathbf{I}_{N_{r}-1}-\mathbf{Z}_{r,-i}\left(
\mathbf{\Sigma }_{G}^{-1}+\mathbf{Z}_{r,-i}^{\prime }\mathbf{Z}%
_{r,-i}\right) ^{-1}\mathbf{Z}_{r,-i}^{\prime }.
\end{equation*}%
Hence, $\mathbf{\Sigma }_{r,-i,-i}^{-1}$ involves computing only the inverse
of $G$-dimensional matrices. This shows the power of using a mixed model approach, whereby dependencies are captured at the lower-dimensional space of the random effects. 

In addition, when $N_{r}$ is particularly large, such as in our real application, we found it computationally beneficial, and not detrimental to the resulting estimators, to replace the expectations (\ref{Eu1})-(\ref{euugiveny}) with the group averages of expectations of the latent variables, that is
\begin{eqnarray}
E\left( u_{gr}\mathbf{|y}_{r}\right) &\approx &\frac{1}{m_{gr}}%
\dsum\limits_{i\in g}\left( E\left( y_{ir}^{\ast }|\mathbf{y}_{r}\right) -%
\bm{\beta }^{\prime }\mathbf{x}_{ir}\right) ,  \label{Eu} \\
E\left( u_{gr}u_{hr}\mathbf{|y}_{r}\right) &\approx &\frac{1}{m_{gr}m_{hr}}%
\dsum\limits_{i\in g;j\in h}E\left[ \left( y_{ir}^{\ast }-\bm{\beta }%
^{\prime }\mathbf{x}_{ir}\right) \left( y_{jr}^{\ast }-\bm{\beta }%
^{\prime }\mathbf{x}_{jr}\right) |\mathbf{y}_{r},\right]  \label{Euu}
\end{eqnarray}%
where $m_{gr}$ is the number of units belonging to group $g$ and located in region $r$ and $\dsum\limits_{i\in g}$ indicates the sum over all units belonging to group $g$ and located in region $r$. The above estimator is widely adopted to proxy random effects (\citealp{Hsiao2003}), also in the context of
cross sectionally dependent panels (\citealp{Moscone2016}). 

Finally, further computational efficiency can be achieved by applying penalised maximum
likelihood, as described in the next subsection.

\subsection{Penalised maximum likelihood estimation} \label{penlik}
When the condition $R>>G$ does not hold, unconstrained maximum likelihood
estimation of $\mathbf{\Phi }_{G}$ is not feasible. In this case, we add an $%
L_{1}$-norm penalty term to the log-likelihood and optimise the penalised
likelihood:%
\begin{equation*}
l_{1}(\bm{\vartheta })=\log \int f_{\mathbf{y,y}^{\ast },\mathbf{u}%
}\left( \mathbf{y},\mathbf{y}^{\ast },\mathbf{u}|\bm{\vartheta }\right) d%
\mathbf{y}^{\ast }d\mathbf{u}-\rho _{G}\left\Vert \mathbf{\Phi }%
_{G}\right\Vert _{1},
\end{equation*}%
where $\rho_G$ is a tuning parameter controlling the degree of sparsity of
the underlying network and $\left\Vert \mathbf{.}\right\Vert _{1}$ is the $%
L_{1}$ norm on the off-diagonal entries of the precision matrix. When $\rho_G$ is large enough, some coefficients in $\mathbf{%
\Phi }_{G}$ are shrunken to zero, resulting in the removal of the
corresponding links in the underlying network. Noting that\ the part of $%
\log f_{\mathbf{y},\mathbf{y}^{\ast },\mathbf{u}}\left( \mathbf{y},\mathbf{y}%
^{\ast },\mathbf{u}|\bm{\vartheta }\right)$ that depends on $%
\mathbf{\Sigma }_{G}^{-1}$ is the log-likelihood of a multivariate normal,%
\begin{equation*}
Q_{1}\left( \bm{\vartheta |\hat{\vartheta}}^{(m)}\right) =-\frac{R}{2}%
\ln \left\vert \mathbf{\Sigma }_{G}\right\vert -\frac{1}{2}{\rm Tr}\left\{ \mathbf{%
\Sigma }_{G}^{-1}\frac{1}{R}\sum_{r=1}^{R}E\left( \mathbf{u}_{r}\mathbf{u}%
_{r}^{\prime }\mathbf{|y}_{r}\right) \right\},
\end{equation*}%
and following the same line of reasoning as in Section \ref{inference}, we
consider the penalised estimation problem for $\mathbf{\Phi }_{G}$ within
the M-step by optimizing%
\begin{equation}
Q_{1,pen}\left( \bm{\vartheta |\hat{\vartheta}}^{(m)}\right) =\frac{R}{2}%
\ln \left\vert \mathbf{\Phi }_{G}\right\vert -\frac{1}{2}Tr\left\{ \mathbf{%
\Phi }_{G}\frac{1}{R}\sum_{r=1}^{R}E\left( \mathbf{u}_{r}\mathbf{u}%
_{r}^{\prime }\mathbf{|y}_{r}\right) \right\} -\rho _{G}\left\Vert \mathbf{%
\Phi }_{G}\right\Vert _{1}.  \label{q1pen}
\end{equation}%
Hence, we alternate between estimation of $\bm{\beta }$ using (\ref%
{bhat1}) and estimation of $\mathbf{\Phi }_{G}$ using (\ref{q1pen}), for which efficient graphical lasso implementations can be used (\citealp{Friedman2008}).

The regularization parameter $\rho _{G}$ defines the level of sparsity of
the associated network $\mathbf{\hat{\Phi}}_{G}$. A number of criteria are
available in the penalised likelihood literature for the selection of this
parameter, such as the Bayesian Information Criteria (BIC). This and most
other methods are based on the likelihood function of the observed data,
which, for our model, is given by (\ref{QH}). \citet{Ibrahim2008}, however,
suggest to use only the $Q$-function in (\ref{Qfunction}) for calculation of
the likelihood. This is more efficient, as the $Q$-function is a direct
output of the EM algorithm, whereas the $H$-function would need to be
calculated separately.

\subsection{Standard errors approximation} \label{stderr}
Calculating standard errors of estimates requires knowledge of the
information matrix associated to the log-likelihood function of the observed
data, known as the observed information matrix. However, this also involves
computation of the $H$-function in (\ref{QH}), which is not a direct output
of the EM\ iterations. Following \citet{Louis1982}, it is possible to
compute the observed information matrix by exploiting the complete data
gradient and curvature. In particular, let $B\left( \mathbf{y}|\bm{%
\vartheta }\right) =\dfrac{\partial ^{2}l\left( \bm{\vartheta }\right) }{%
\partial \vartheta_i\partial \vartheta_j} $ be the
partial second derivatives of the observed data log-likelihood and $S\left( \mathbf{y,y}^{\ast },\mathbf{u}%
|\bm{\vartheta }\right) =\dfrac{\partial \log f_{\mathbf{y,y}^{\ast },%
\mathbf{u}}\left( \mathbf{y},\mathbf{y}^{\ast },\mathbf{u}|\bm{\vartheta
}\right) }{\partial \vartheta}$ and $B\left( \mathbf{y,y}^{\ast },%
\mathbf{u}|\bm{\vartheta }\right) =\dfrac{\partial ^{2}\log f_{\mathbf{%
y,y}^{\ast },\mathbf{u}}\left( \mathbf{y},\mathbf{y}^{\ast },\mathbf{u}|%
\bm{\vartheta }\right) }{\partial \vartheta_i\partial
\vartheta_j}$ be the gradient and second derivative of the complete data
log-likelihood, respectively. It is possible to show that:%
\begin{align}
B\left( \mathbf{y}|\bm{\vartheta }\right) & =E\left[ B\left( \mathbf{y,y}%
^{\ast },\mathbf{u}|\bm{\vartheta }\right) |\mathbf{y}\right] +E\left[
S\left( \mathbf{y,y}^{\ast },\mathbf{u}|\bm{\vartheta }\right) S\left(
\mathbf{y,y}^{\ast },\mathbf{u}|\bm{\vartheta }\right) ^{\prime }|%
\mathbf{y}\right]  \label{B} \\
& -E\left[ S\left( \mathbf{y,y}^{\ast },\mathbf{u}|\bm{\vartheta }%
\right) |\mathbf{y}\right] E\left[ S\left( \mathbf{y,y}^{\ast },\mathbf{u}|%
\bm{\vartheta }\right) |\mathbf{y}\right] ^{\prime }.  \notag
\end{align}%
Hence, by exploiting the law of iterated expectations as well as the approximation (%
\ref{ass1}), it is also possible to compute efficiently all terms appearing on the right
hand side of (\ref{B}). In Appendix  D 
we provide finite expressions
for the elements of $B\left( \mathbf{y}|\bm{\vartheta }\right) $.

\section{Simulation study} \label{montecarlo}
In order to assess the performance of our proposed approach, we consider a simulation study using the following data generating process:%
\begin{eqnarray*}
y_{ir}^{\ast } &=&\beta x_{ir}+\mathbf{z}_{ir}^{\prime }\mathbf{u}%
_{r}+\varepsilon _{ir},i=1,2,...,N_{r};r=1,2,...R,  \label{mc1} \\
y_{ir} &=&1\text{ if }y_{ir}^{\ast }\geq 0\text{, 0 otherwise,}  \label{mc2}
\end{eqnarray*}%
where we set $\beta =1$, $\mathbf{x}_{r}=\left(
x_{1r},x_{1r},...,x_{N_{r}r}\right) \sim N(\mathbf{0},\mathbf{\Sigma }_{X})$
and $\mathbf{u}_{r}\sim N(\mathbf{0},\mathbf{\Sigma }_{G})$. To generate $%
\mathbf{\Sigma }_{G}$, we start from $\mathbf{\Theta }_{G}=\mathbf{\Sigma }%
_{G}^{-1}$ and assume that $\theta _{gh,G}\sim Bin\left( 1,\frac{3}{G}%
\right) $ for $g=1,...,G,h=g,...,G$. We then let $\mathbf{D}$ be the
Choleski decomposition of $\mathbf{\Sigma }_{G}$, namely $\mathbf{\Sigma }%
_{G}=\mathbf{DD}^{\prime }$, and we generate $\mathbf{u}_{r}=\mathbf{%
D\epsilon }_{r}$, where $\bm{\epsilon }_{r}=\left( \epsilon
_{1r},\epsilon _{2r},...,\epsilon _{Gr}\right) ^{\prime }$, with $\epsilon
_{ir}\sim IDN(0,1)$. We finally obtain $\mathbf{\Sigma }_{r}$ by applying
formula (\ref{9}). We generate $\mathbf{\Sigma }_{X}$ following the same
procedure.

We carry out two sets of experiments, one with $R=200$ (case of large $R$),
where we compute our proposed estimator, which we call \texttt{mixed
graphical Probit}, and one with $R=50$ where we compute a penalised version
of our estimator. In both experiments we also compute the conventional mixed
Probit with uncorrelated random effects. We take $N_{r}=N=50,100,250$ and
vary $G$ depending on $N_{r}$, from $G=10$ to $G=125$. Each experiment was
replicated $50$ times. In a separate experiment we also carry out a
comparison of our estimator with the Monte Carlo EM estimator by %
\citet{Chan1997}, in terms of performance of estimators and computational
time. Due to the high computational cost of the Monte Carlo EM approach, for
this experiment we have selected smaller values of $G$ ($G\leq 25$). This
comparison is important because the Monte Carlo EM estimator by %
\citet{Chan1997} does not rely on the conditional approximation (\ref{ass1}%
). For the same combinations of $N$ and $R$, we also compare the properties
and computational time of the mixed graphical Probit estimator using (\ref%
{Eu1})-(\ref{euugiveny}) with those of the same estimator based on their
approximations (\ref{Eu})-(\ref{Euu}).

A number of statistics are used to assess the performance of our estimators.
We first report the Receiver Operating Characteristic (ROC) curve for the
predicted outcomes, plotting percentage of non-zero outcomes correctly
predicted as non-zero versus the percentage of zeros incorrectly predicted
as non-zeros, as the classification threshold varies between 0 and 1. To
this end, we generate a testing sample with the same Monte Carlo design as
above, and employ the parameters estimated in the training sample to
calculate predictions. As for the estimation of the slope parameter, $\beta $%
, we report bias and Root Mean Squared Error (RMSE), given by $%
1/50\sum_{s=1}^{50}{\hat{\beta}}_{s}-\beta $, and $\sqrt{1/50\sum_{s=1}^{50}%
\left( \hat{\beta}_{s}-\beta \right) ^{2}}$, respectively. Under penalised
ML\ estimation, we select the regularization parameter $\rho _{G}$ with the
value closest to the true sparsity level. This is only possible in a
simulation setting and allows our results to not depend on the specific
choice of model selection criterion. In addition, we summarise the recovery
of the network structure across the whole path of regularization parameters,
by reporting the corresponding ROC curve. This plots the true positive rate,
i.e. percentage of non-zeros in the estimated precision
matrix $\mathbf{\Phi }_{G}$, that is detected links, correctly estimated as non-zero, versus the
false positive rate, i.e. percentage of zeros incorrectly estimated as
non-zeros, as the tuning parameter, $\rho _{G}$, varies.

The results are reported in Figure \ref{fig1}-\ref{fig4} and Table \ref{tab2}.
Figures \ref{fig1}-\ref{fig2} show the ROC curves for the predicted
outcomes and precision matrix, respectively, estimated by maximum likelihood
and penalised maximum likelihood for varying $N$ and $G$. As expected, the
performance of the mixed graphical Probit estimator improves as $N$
increases for fixed $R$\ and $G$, while it deteriorates as $G$ rises,
holding $N$ and $R$ constant. This result can be explained by looking at the
main features of our model. In fact, as $N$ increases we have more and more
observations to estimate the unknown parameters $\beta $ and $\mathbf{\Phi }%
_{G}$, while when $G$ increases we have more and more parameters to estimate.

Figures \ref{fig3}-\ref{fig4} compare the ROC curve of the predicted
outcomes of our proposed estimator against the conventional mixed Probit
estimator, for the large $R$ and short $R$ scenarios. For all combinations
of $N$, $G$ and $R$ the mixed graphical Probit outperforms the conventional
mixed Probit in predicting correctly the outcome variable. The improvement
in performance seems to be more important when $G$ is small relative to $N$
and when $N$ is large.

Table \ref{tab2} reports the bias, RMSE\ and computational time for the
proposed approach using (\ref{Eu1})-(\ref{euugiveny}) versus the same
estimator based on their approximations (\ref{Eu})-(\ref{Euu}) and finally
versus the full Monte Carlo EM estimator by \citet{Chan1997}, which does not
make the approximation in (\ref{ass1}). It is interesting to observe that
the three estimators have a small bias and RMSE, and that these decrease as $%
N$ rises, while their performance slightly deteriorates as the number of
groups ($G$) increases. Comparing the results in Column (I) and (II), the
computational time of the estimator based on (\ref{Eu})-(\ref{Euu}) is
significantly smaller than that of the estimator based on  (%
\ref{Eu1})-(\ref{euugiveny}), thus supporting the use of group averages of
conditional expectations to proxy random effects. The fact that the bias and
RMSE\ of the estimators in Column (I)-(II) are of comparable size with that
in Column (III) indicates that the approximation in (\ref{ass1}), adopted
both in Column (I) and (II), does not significantly affect the properties of
our estimators. However, the difference in the computational time between
the graphical mixed Probit estimators in Column (I)-(II) and the full Monte
Carlo EM\ estimator in Column (III) is striking, with the mixed graphical
Probit carrying out one estimation in few seconds across all experiments,
against a computational time that can be as long as few minutes in the case
of the Monte Carlo EM\ algorithm.

\begin{figure}
\centering
{\includegraphics[width=0.8\textwidth]{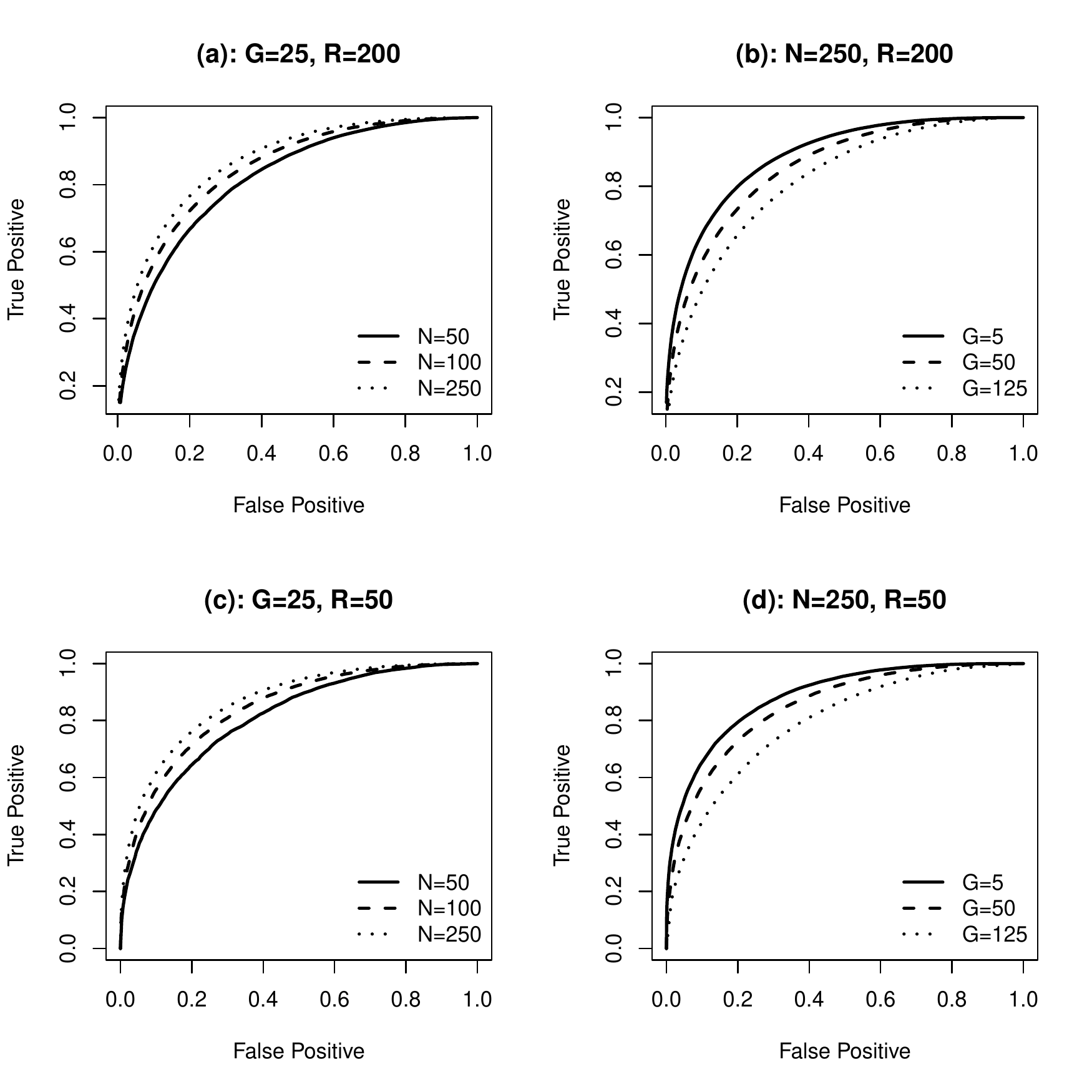} }
\caption{\label{fig1} Simulation study: ROC curves on predicted outcomes
on test set for varying $N$ and $G$, with parameters estimated on the
training set under maximum likelihood (top) and penalised maximum likelihood
(bottom).}
\end{figure}

\begin{figure}
\centering
{\includegraphics[width=0.8\textwidth]{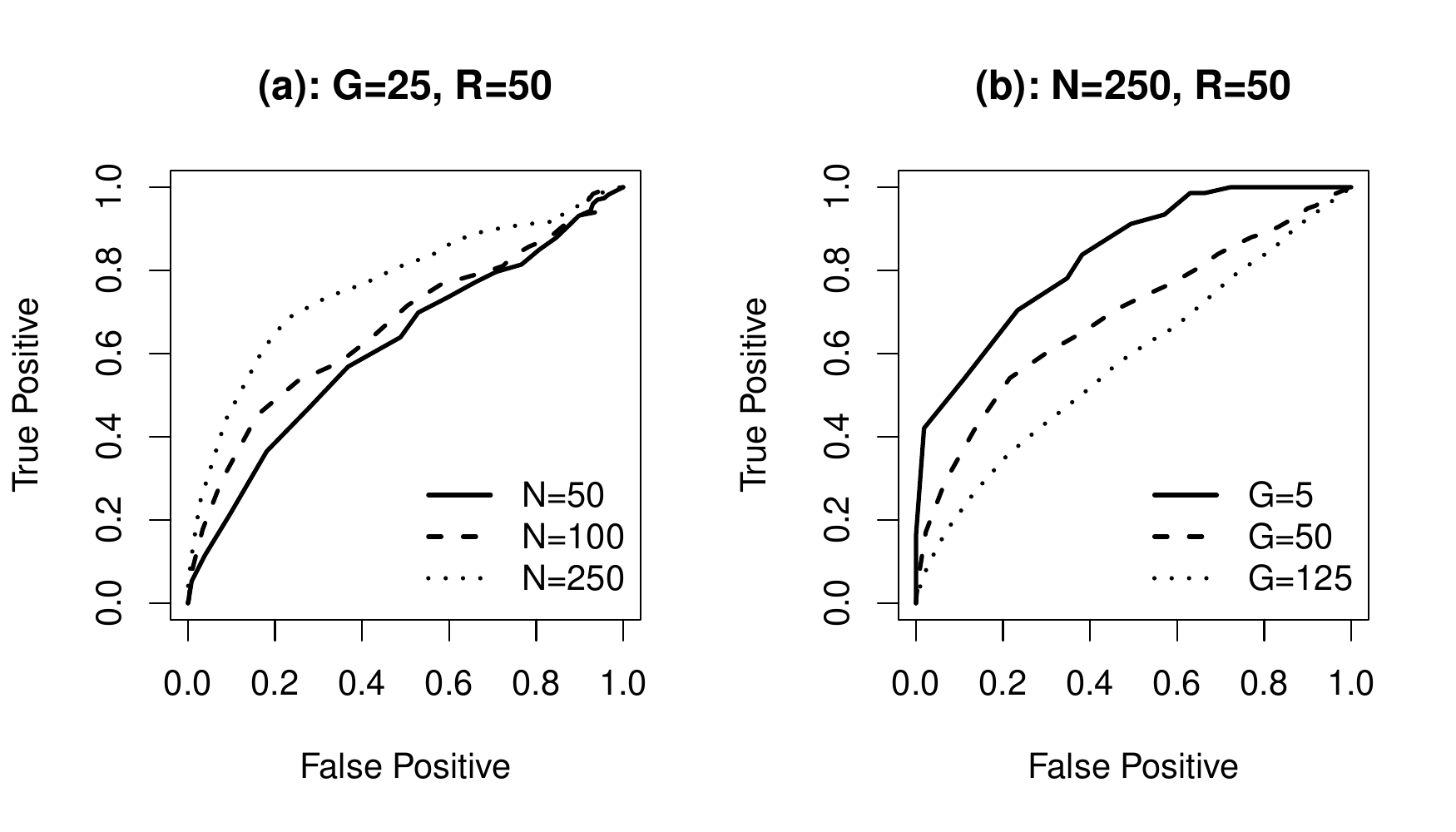} }
\caption{\label{fig2} Simulation study: ROC curves of network discovery for varying N and G under penalised maximum likelihood estimation.}
\end{figure}

\begin{figure}
\centering
{\includegraphics[width=0.8\textwidth]{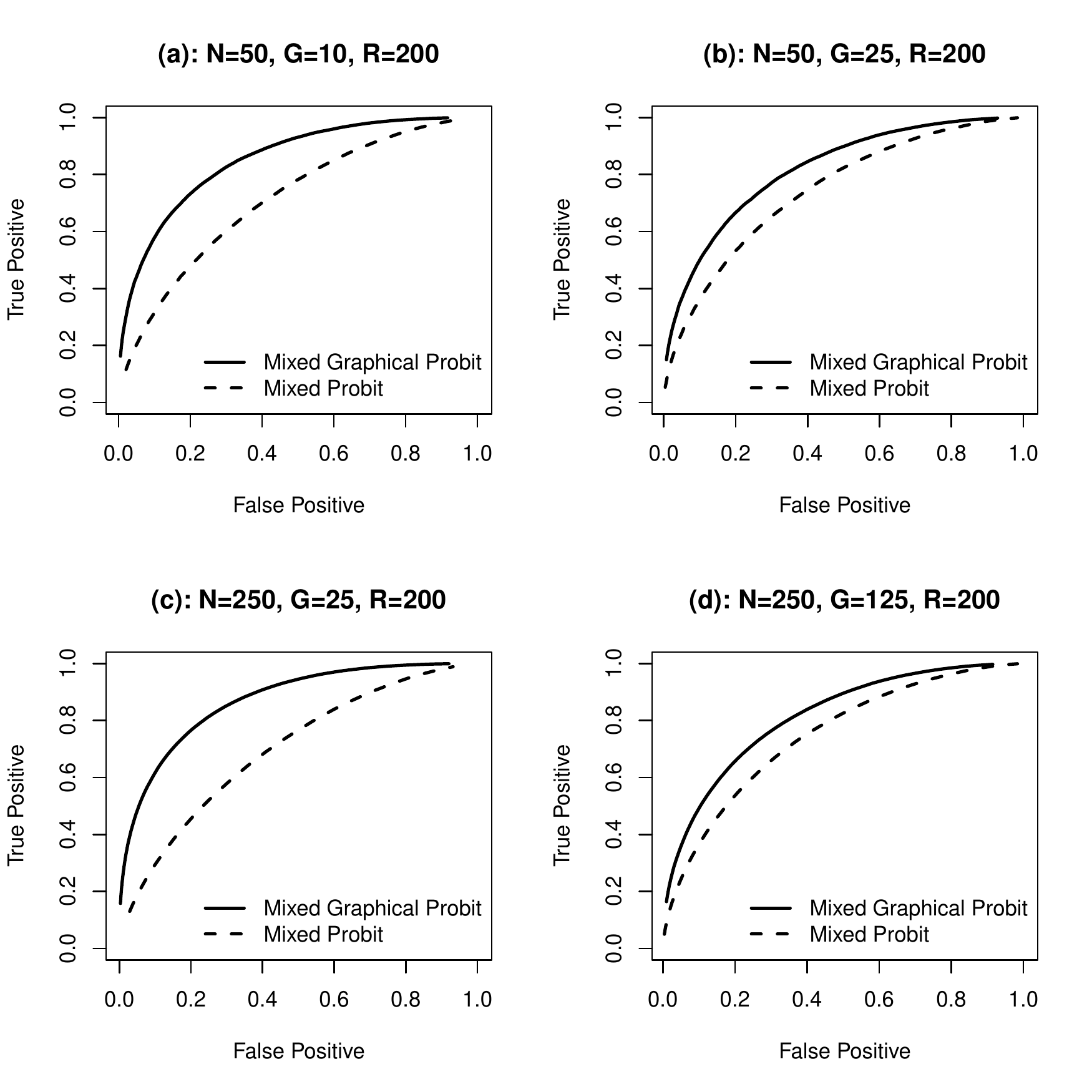} }
\caption{\label{fig3} Simulation study: ROC curves on predicted outcomes
on test set for varying $N$ and $G$ using the mixed graphical Probit
(maximum likelihood) and the mixed Probit with uncorrelated effects.}
\end{figure}

\begin{figure}
\centering
{\includegraphics[width=0.8\textwidth]{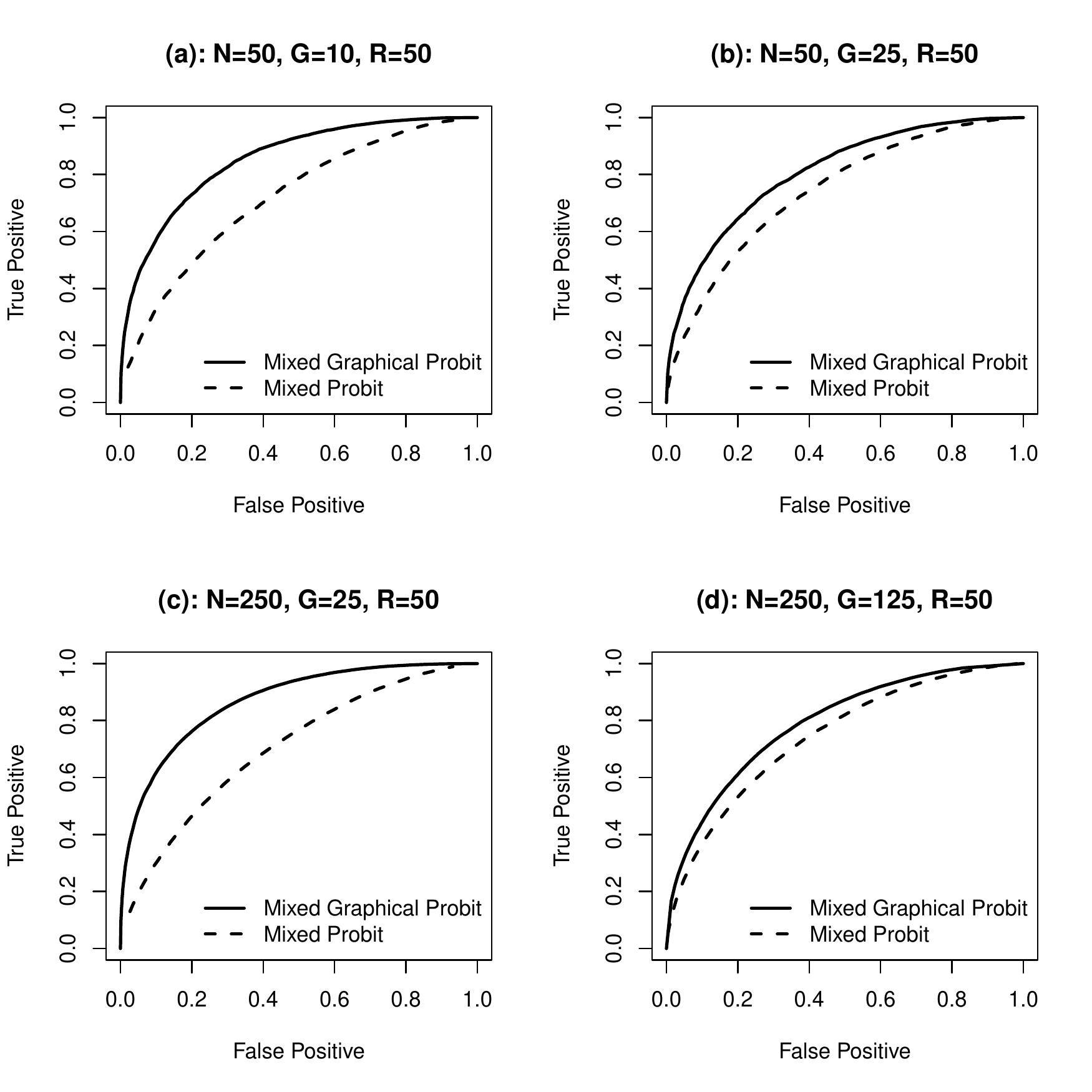} }
\caption{\label{fig4} Simulation study: ROC curves on predicted outcomes
on test set for varying $N$ and $G$ using mixed graphical Probit (penalised
likelihood) and mixed Probit with uncorrelated effects.}
\end{figure}

\begin{table}
\caption{Simulation study: properties of mixed graphical Probit using approximation (\ref{ass1}) and further using (\ref{Eu})-(\ref{Euu}) in place of (\ref{Eu1})-(\ref{euugiveny}), compared with the full Monte Carlo EM estimator. The average computational
time for carrying out one estimation is reported, expressed in seconds. \label{tab2}}
\centering%
\resizebox{\textwidth}{!}{
\begin{tabular}{|ccc|ccc|ccc|c|c|c|}
\hline
&  &  & \multicolumn{3}{|c|}{(I): {\footnotesize Mixed graphical Probit}} &
\multicolumn{3}{|c|}{(II): {\footnotesize Mixed graphical Probit}} &
\multicolumn{3}{|c|}{(III): {\footnotesize Mixed graphical Probit }} \\
&  &  & \multicolumn{3}{|c|}{{\footnotesize using approximations (\ref{ass1})-(\ref{Eu})-(\ref{Euu})}}
& \multicolumn{3}{|c|}{{\footnotesize using approximation (\ref{ass1})}}
& \multicolumn{3}{|c|}{\footnotesize using full Monte Carlo EM} \\ \hline
{\footnotesize N} & {\footnotesize G} & {\footnotesize R} & {\footnotesize %
Bias} & {\footnotesize RMSE} & {\footnotesize Time} &
{\footnotesize Bias} & {\footnotesize RMSE} & {\footnotesize Time}
& {\footnotesize Bias} & {\footnotesize RMSE} & {\footnotesize Time}\\ \hline
{\footnotesize 50} & {\footnotesize 10} & {\footnotesize 200} &
{\footnotesize -0.0021} & {\footnotesize 0.0209} & {\footnotesize 3.0} &
{\footnotesize -0.0080} & {\footnotesize 0.0226} & {\footnotesize 23.4} &
{\footnotesize 0.0266} & {\footnotesize 0.0328} & {\footnotesize 54.0} \\
{\footnotesize 50} & {\footnotesize 25} & {\footnotesize 200} &
{\footnotesize -0.0771} & {\footnotesize 0.0812} & {\footnotesize 4.1} &
{\footnotesize -0.0939} & {\footnotesize 0.0973} & {\footnotesize 305.8} &
{\footnotesize -0.0704} & {\footnotesize 0.0758} & {\footnotesize 46.8} \\
{\footnotesize 100} & {\footnotesize 10} & {\footnotesize 200} &
{\footnotesize -0.0012} & {\footnotesize 0.0149} & {\footnotesize 6.6} &
{\footnotesize -0.0018} & {\footnotesize 0.0149} & {\footnotesize 13.5} &
{\footnotesize 0.0183} & {\footnotesize 0.0247} & {\footnotesize 149.7} \\
{\footnotesize 100} & {\footnotesize 25} & {\footnotesize 200} &
{\footnotesize -0.0056} & {\footnotesize 0.0169} & {\footnotesize 6.2} &
{\footnotesize -0.0087} & {\footnotesize 0.0181} & {\footnotesize 153.8} &
{\footnotesize 0.0199} & {\footnotesize 0.0253} & {\footnotesize 128.3} \\
{\footnotesize 250} & {\footnotesize 10} & {\footnotesize 200} &
{\footnotesize -0.0018} & {\footnotesize 0.0108} & {\footnotesize 21.9} &
{\footnotesize -0.0010} & {\footnotesize 0.0107} & {\footnotesize 25.9} &
{\footnotesize 0.0093} & {\footnotesize 0.0130} & {\footnotesize 1586.5} \\
{\footnotesize 250} & {\footnotesize 25} & {\footnotesize 200} &
{\footnotesize 0.0001} & {\footnotesize 0.0115} & {\footnotesize 28.2} &
{\footnotesize 0.0054} & {\footnotesize 0.0129} & {\footnotesize 97.3} &
{\footnotesize 0.0198} & {\footnotesize 0.0222} & {\footnotesize 1665.2} \\
\hline
\end{tabular}}
\end{table}

\section{Credit risk Probit model with correlated effects} \label{results}

We now employ the proposed approach to estimate a default prediction model for SMEs based on the data described in Section \ref{data}.
To assess the performance of the classifier, we randomly spit the sample into two groups:
40,000 companies are used for estimation (training sample) and the remaining
accounts for testing the prediction accuracy of the model (hold-out sample).
In particular, we compare the prediction performance and estimated
parameters of a conventional credit risk model (mixed and non-mixed) with
that of a credit risk model that incorporates network effects.

Table \ref{tabcr} shows the estimated regression coefficients and standard
errors for the proposed mixed graphical Probit using maximum likelihood
(Column (I)), compared with those of a conventional mixed Probit, with
uncorrelated random effects (Column (II)). Standard errors for the mixed
graphical Probit have been calculated using the observed information matrix
(see Appendix D).

Focusing on Column (I), the coefficient attached to cash over total assets
is statistically significant with a negative sign, indicating that companies
with higher cash reserves relative to current assets are less likely to
default. The results also show a negative and statistically significant
impact for the variable ``retained profits on total assets'': the higher the
net profits with respect to the investments made, the lower the probability
for the firm to go bankrupt. The variable trade debt has a negative and
significant coefficient, meaning that the higher the money a company is
expected to receive from other companies as a result of trade, the less
likely the company is to default. Looking at the non-financial variables,
the coefficients attached to ``size'' and ``age'' indicate that, as expected,
larger and older companies have lower probabilities of default. However,
companies aged between 3 and 9 years have a relatively higher likelihood of
being insolvent. Comparing with the results reported in column (II), the%
\textbf{\ }incorporation of network effects in the Probit model does not
seem to change significantly the estimated coefficients for this data set,
although the standard errors are slightly smaller for the proposed method,
which results in the age risk variable being significant in the mixed
graphical approach but not in the conventional approach.

Table \ref{tabperf} reports the classification accuracy statistics on the
hold-out sample, for the mixed graphical Probit, the mixed Probit with
uncorrelated random effects and the conventional Probit, that ignores
unobserved heterogeneity and is often used in credit risk modelling. When
adopting the mixed graphical Probit, the overall classification accuracy is
significantly improved. Given the high number of non-failed companies in the
data, the mixed graphical Probit is particularly good at identifying
correctly companies that did not fail. This is confirmed also by the ROC
curve in Figure \ref{roc_app}, where the ROC of the mixed graphical Probit
lies always above the ROC of the mixed Probit and conventional Probit.

\begin{figure}
\centering
{\includegraphics[width=0.5\textwidth]{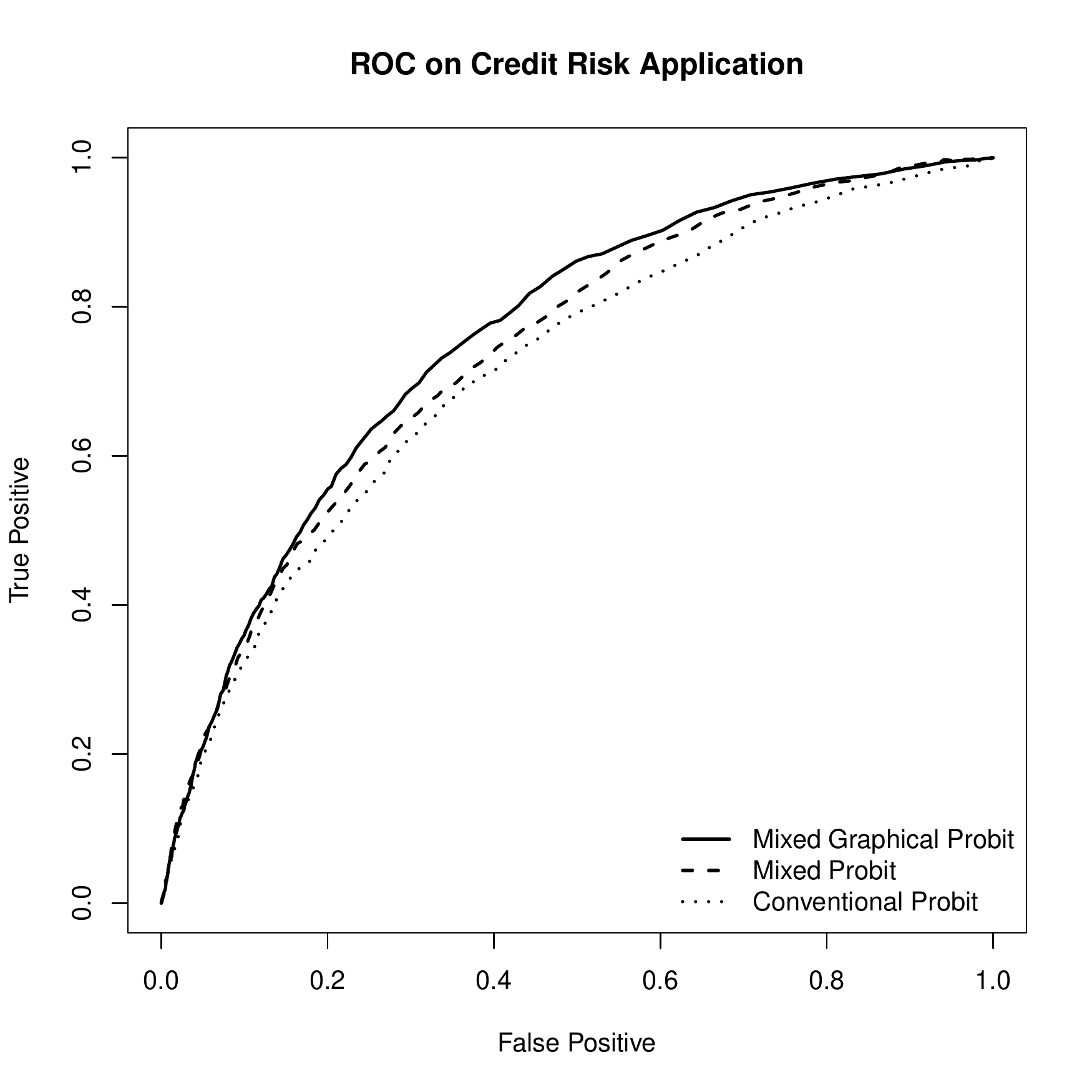} }
\caption{\label{roc_app} ROC curves of predicted outcomes
on test sample: comparison between mixed graphical Probit, mixed Probit
and conventional Probit on the credit risk application.}
\end{figure}

To explore the network of dependencies we have also performed penalised
likelihood estimation, selecting the penalisation parameter as that yielding
the highest percentage of non-zero and zero outcomes correctly predicted.
Results are very similar to those from maximum likelihood in terms of
estimation of regression coefficients. However, one interesting output of
penalised likelihood estimation is the estimated sparse precision matrix,
which gives an indication of the more connected sectors in the economy.
Figure \ref{fig_app} shows the estimated network, where links between any
two sectors appear when there is a non-zero conditional correlation among
them. It is interesting to see that the sectors that are more central to the
network are those from the real estate, manufacturing industry, and the
activities of households as employers, whereas we mostly find services
activities sectors, and in particular, the sectors ``arts, entertainment and
recreation'' and ``transportation and storage'' not highly connected.

\begin{table}
\caption{Regression coefficients and standard errors estimated by the
proposed credit risk model and a conventional mixed probit model on the
training sample of the credit risk application. (*) denotes significance at the 5\% level. \label{tabcr}}
\centering
\begin{tabular}{|l|cc|cc|}
\hline
& \multicolumn{2}{|c|}{(I): {\footnotesize Mixed Graphical Probit}} & \multicolumn{2}{|c|}{
(II): {\footnotesize Mixed Probit}} \\ \hline
{\footnotesize Variable} & {\footnotesize Parameter} & {\footnotesize Standard Error}
& {\footnotesize Parameter} & {\footnotesize Standard Error} \\ \hline
{\footnotesize Total liabilities/total assets} & {\footnotesize 0.0215} & {\footnotesize 0.0146} &
{\footnotesize 0.0218} & {\footnotesize 0.0150} \\
{\footnotesize Networth/total liabilities} &
{\footnotesize 0.0001} & {\footnotesize 0.0011} &
{\footnotesize 0.0001} & {\footnotesize 0.0013} \\
{\footnotesize Cash/total assets} & {\footnotesize %
-0.1150*} & {\footnotesize 0.0359} & {\footnotesize %
-0.1157*} & {\footnotesize 0.0409} \\
{\footnotesize Current liabilities/current assets} &
{\footnotesize -0.0063} & {\footnotesize 0.0088} & %
{\footnotesize -0.0063} & {\footnotesize 0.0091} \\
{\footnotesize Retained profits/total assets} &
{\footnotesize -0.1428*} & {\footnotesize 0.0248} &
{\footnotesize -0.1428*} & {\footnotesize 0.0250} \\
{\footnotesize Account receivable/total liabilities} &
{\footnotesize -0.7437} & {\footnotesize 1.3177} &
{\footnotesize -0.7378} & {\footnotesize 1.4844} \\
{\footnotesize Trade credit/total liabilities} &
{\footnotesize 0.0309} & {\footnotesize 0.0345} &
{\footnotesize 0.0354} & {\footnotesize 0.0386} \\
{\footnotesize Trade debt/total assets} & {\footnotesize %
-0.2094*} & {\footnotesize 0.0507} & {\footnotesize %
-0.2084*} & {\footnotesize 0.0553} \\
{\footnotesize Size} & {\footnotesize -0.0814*} &
{\footnotesize 0.0039} & {\footnotesize -0.0807*} &
{\footnotesize 0.0053} \\
{\footnotesize Age} & {\footnotesize -0.1956*} &
{\footnotesize 0.0116} & {\footnotesize -0.1966}$^{\ast
} $ & {\footnotesize 0.0144} \\
{\footnotesize Age risk} & {\footnotesize 0.0479*} &
{\footnotesize 0.0246} & {\footnotesize 0.0485} &
{\footnotesize 0.0252} \\
{\footnotesize Regional GDP} & {\footnotesize 0.0133} &
{\footnotesize 0.0101} &{\footnotesize 0.0131} &
{\footnotesize 0.0242} \\ \hline
\end{tabular}
\end{table}

\begin{table}
\caption{Performance of the credit risk models on the testing sample.\label{tabperf}}
\centering
\begin{tabular}{|l|c|c|}
\hline
& \multicolumn{2}{|c|}{\footnotesize Percentage correctly classified} \\
\hline
& \multicolumn{1}{|c|}{\footnotesize Non-failed} & {\footnotesize Failed} \\
\hline
{\footnotesize Mixed Graphical Probit} & {\footnotesize 66.43} &
{\footnotesize 73.03} \\
{\footnotesize Mixed Probit} & {\footnotesize 64.67} & {\footnotesize 69.70}
\\
{\footnotesize Conventional Probit} & {\footnotesize 62.99} & {\footnotesize %
66.54} \\ \hline
\end{tabular}
\end{table}

\begin{figure}
\centering
{\includegraphics[width=\textwidth]{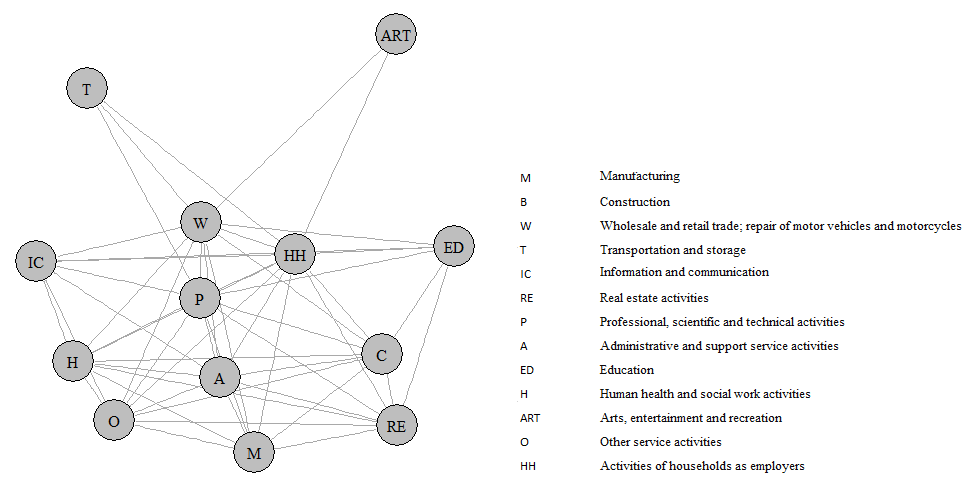} }
\caption{\label{fig_app} Credit risk application: estimated network between
sectors of economic activity.}
\end{figure}

\section{Concluding remarks} \label{concl}

In this paper we have proposed a computationally efficient EM\ algorithm for
ML estimation of a mixed Probit model with correlated group-specific
effects and have shown its use in a credit risk application, for which existing approaches were prohibitively slow.
We have proposed unconstrained and penalised
likelihood estimation approaches for inference and have derived the observed information matrix and asymptotic
standard errors of the estimates. The penalised approach is
suitable for when the number of groups is large relative to the number of
observations, for which maximum likelihood fails, or when the underlying
network is expected to be sparse. An extensive simulation study showed that our
proposed estimator has good finite sample properties and can be adopted for
estimation and prediction using very large data sets, given its moderate
computational costs.

A large-scale credit risk application on a unique dataset on SMEs, a setting in which credit risk modelling is currently under-developed, showed that accounting for network effects makes a significant contribution to increasing the default prediction power of risk
models and therefore that efficient inferential procedures for these
models are particularly useful in this field.

\section*{Acknowledgements}
The authors acknowledge financial support from EPSRC [EP/L021250/1]. We thank the financial institution that provided the data, George Foy for
assisting with data retrieval, and Francesco Moscone, Sergio di Cesare and
Mark Lycett for helpful comments on this manuscript.

\bibliographystyle{chicago}	
\bibliography{scribesept16}

\section*{Appendix A: Moments of truncated normals} \label{truncnorm}
       \renewcommand{\theequation}{A.\arabic{equation}}
       \renewcommand{\thesection}{A}
       \setcounter{equation}{0}

       \medskip

We now provide the formulae for deriving the central and non-central moments
of $y_{ir}^{\ast }$ given $\mathbf{y}_{-i,r}^{\ast },y_{ir}$. By the theorem
on conditional normals, $y_{ir}^{\ast }$ given $\mathbf{y}_{-i,r}^{\ast }$
has a normal distribution with mean and variance:%
\begin{eqnarray*}
\tilde{\mu}_{ir} &=&\bm{\beta }^{\prime }\mathbf{x}_{ir}+\mathbf{\Sigma }%
_{r,i,-i}\mathbf{\Sigma }_{r,-i,-i}^{-1}\left( \mathbf{y}_{-i,r}^{\ast }-%
\mathbf{X}_{-i,r}\bm{\beta }\right) , \\
\tilde{\sigma}_{ir}^{2} &=&\sigma _{ir}^{2}-\mathbf{\Sigma }_{r,i,-i}\mathbf{%
\Sigma }_{r,-i,-i}^{-1}\mathbf{\Sigma }_{r,-i,i},
\end{eqnarray*}%
where $\sigma _{ir}^{2}$ is the ($i,i$)th element of $\mathbf{\Sigma }_{r}$.
It follows that the conditional distribution of $y_{ir}^{\ast }$ given $%
\mathbf{y}_{-i,r}^{\ast },y_{ir}$ is a truncated normal. Let $\xi _{ir,1}=%
\frac{t_{1}-\tilde{\mu}_{ir}}{\tilde{\sigma}_{ir}}$, $\xi _{ir,2}=\frac{%
t_{2}-\tilde{\mu}_{ir}}{\tilde{\sigma}_{ir}}$ and $\rho _{1,ir}=\frac{\phi
\left( \xi _{ir,1}\right) -\phi \left( \xi _{ir,2}\right) }{\Phi \left( \xi
_{ir,2}\right) -\Phi \left( \xi _{ir,1}\right) }$, $\rho _{2,ir}=\frac{\xi
_{ir,2}\phi \left( \xi _{ir,1}\right) -\xi _{ir,1}\phi \left( \xi
_{ir,2}\right) }{\Phi \left( \xi _{ir,2}\right) -\Phi \left( \xi
_{ir,1}\right) }$ with
\begin{equation*}
t_{1}=\left\{
\begin{array}{c}
0,\text{ if }y_{ir}=1 \\
-\infty ,\text{ if }y_{ir}=0%
\end{array}%
\right. ,t_{2}=\left\{
\begin{array}{c}
\infty ,\text{ if }y_{ir}=1 \\
0,\text{ if }y_{ir}=0%
\end{array}%
\right. ,
\end{equation*}%
and$\ \phi $, $\Phi $ are the density and cumulative distribution,
respectively, of a standard normal distribution. The first and second
moments of $y_{ir}^{\ast }$ given $\mathbf{y}_{-i,r}^{\ast },y_{ir}$ are:%
\begin{eqnarray*}
\lambda _{i,1} &=&\tilde{\mu}_{ir}+\rho _{1,ir}\tilde{\sigma}_{ir},
\label{meant} \\
\lambda _{i,2} &=&\tilde{\mu}_{ir}^{2}+\tilde{\sigma}_{ir}^{2}+2\rho _{1,ir}%
\tilde{\sigma}_{ir}\tilde{\mu}_{ir}+\rho _{2,ir}\tilde{\sigma}_{ir}^{2},
\label{vart}
\end{eqnarray*}%
while the second, third and fourth central moments of $y_{ir}^{\ast }$ given
$\mathbf{y}_{-i,r}^{\ast },y_{ir}$ are (see \citealp{Horrace2015}):%
\begin{eqnarray*}
\lambda _{i,2}^{c} &=&\tilde{\sigma}_{ir}^{2}-\tilde{\sigma}_{ir}\rho
_{1,ir}\lambda _{i,1} \\
\lambda _{i,3}^{c} &=&\tilde{\sigma}_{ir}\rho _{1,ir}\left( \lambda
_{i,1}^{2}-\lambda _{i,2}^{c}\right) , \\
\lambda _{i,4}^{c} &=&2\tilde{\sigma}_{ir}^{4}-3\left( \tilde{\sigma}%
_{ir}\rho _{1,ir}\lambda _{i,1}^{c}\right) ^{2}-\tilde{\sigma}_{ir}^{-1}\rho
_{1,ir}\lambda _{i,3}^{c}+\tilde{\mu}_{ir}^{2}\lambda _{i,2}^{c}.
\end{eqnarray*}

\renewcommand{\thesection}{2}

\section*{Appendix B: Conditional expectations} \label{app2}
       \renewcommand{\theequation}{B.\arabic{equation}}
       \renewcommand{\thesection}{B}
       \setcounter{equation}{0}

       \medskip

Using the law of iterated expectations we know that:%
\begin{eqnarray*}
E\left( \mathbf{u}_{r}\mathbf{|y}_{r}\right) &=&E\left[ E\left( \mathbf{u}%
_{r}\mathbf{|y}_{r}^{\ast }\right) \mathbf{|y}_{r}\right] , \\
E\left( \mathbf{u}_{r}\mathbf{u}_{r}^{\prime }\mathbf{|y}_{r}\right) &=&E%
\left[ E\left( \mathbf{u}_{r}\mathbf{u}_{r}^{\prime }\mathbf{|y}_{r}^{\ast
}\right) \mathbf{|y}_{r}\right] .
\end{eqnarray*}%
Noting that%
\begin{equation*}
\left(
\begin{array}{c}
\mathbf{u}_{r} \\
\mathbf{y}_{r}^{\ast }%
\end{array}%
\right) \sim N\left(
\begin{array}{c}
\mathbf{0} \\
\mathbf{X}_{r}\bm{\beta }%
\end{array}%
,%
\begin{array}{cc}
\mathbf{\Sigma }_{G} & \mathbf{\Sigma }_{G}\mathbf{Z}_{r}^{\prime } \\
\mathbf{Z}_{r}\mathbf{\Sigma }_{G} & \mathbf{\Sigma }_{r}%
\end{array}%
\right) ,
\end{equation*}%
we can use the theorem on conditional normals to obtain:%
\begin{equation*}
E\left( \mathbf{u}_{r}\mathbf{|y}_{r}^{\ast }\right) =\mathbf{\Sigma }_{G}%
\mathbf{Z}_{r}^{\prime }\mathbf{\Sigma }_{r}^{-1}\left( \mathbf{y}_{r}^{\ast
}-\mathbf{X}_{r}\bm{\beta }\right) ,
\end{equation*}%
so that (\ref{Eu1}) holds. Similarly, focusing on $E\left( \mathbf{u}_{r}%
\mathbf{u}_{r}^{\prime }\mathbf{|y}_{r}\right) $ and using again the theorem
on conditional normals we obtain (\ref{euugiveny}).

\renewcommand{\thesection}{2}

\section*{Appendix C: Useful results on moments of quadratic forms} \label{app1}
       \renewcommand{\theequation}{C.\arabic{equation}}
       \renewcommand{\thesection}{C}
       \setcounter{equation}{0}

       \medskip

In the following we provide a set of results that are useful for our
theoretical derivations in Appendix D.

\begin{lemma}
\label{lemma1}Let $\mathbf{z}\sim N\left( \bm{\mu },\mathbf{\Sigma }%
\right) $ be a $n$-dimensional random vector, and $\mathbf{A,B}$ two $%
n\times n$ symmetric matrices. Then:%
\begin{eqnarray*}
E\left( \mathbf{z}^{\prime }\mathbf{Az}\right) &=&Tr\left( \mathbf{\Sigma A}%
\right) +\bm{\mu }^{\prime }\bm{A\mu }, \\
E\left( \mathbf{z}^{\prime }\mathbf{Azz}^{\prime }\mathbf{Bz}\right) &=&%
\left[ Tr(\mathbf{A\Sigma })+\bm{\mu }^{\prime }\bm{A\mu }\right] %
\left[ Tr(\mathbf{B\Sigma })+\bm{\mu }^{\prime }\bm{B\mu }\right]
+2Tr\left( \mathbf{\Sigma \mathbf{A}\Sigma B}\right) +4\bm{\mu }^{\prime
}\bm{A\Sigma B\mu } \\
E\left[ \left( \mathbf{z-x}\right) ^{\prime }\mathbf{A}\left( \mathbf{z-x}%
\right) \right] &=&Tr\left( \mathbf{\Sigma A}\right) +\left( \bm{\mu -x}%
\right) ^{\prime }\mathbf{A}\left( \bm{\mu -x}\right) ,
\end{eqnarray*}
\end{lemma}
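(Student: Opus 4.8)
The plan is to dispatch the three identities in increasing order of difficulty, reducing the last two to the first plus a single genuinely fourth-order computation.

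For the first identity I would write the scalar as a trace, $\mathbf{z}^{\prime }\mathbf{A}\mathbf{z}=\mathrm{Tr}\!\left(\mathbf{A}\mathbf{z}\mathbf{z}^{\prime }\right)$, move the expectation inside the (linear) trace, and substitute $E(\mathbf{z}\mathbf{z}^{\prime })=\mathbf{\Sigma }+\bm{\mu }\bm{\mu }^{\prime }$; the cyclic invariance of the trace then gives $\mathrm{Tr}\!\left(\mathbf{A}\mathbf{\Sigma }\right)+\mathrm{Tr}\!\left(\mathbf{A}\bm{\mu }\bm{\mu }^{\prime }\right)=\mathrm{Tr}\!\left(\mathbf{\Sigma A}\right)+\bm{\mu }^{\prime }\mathbf{A}\bm{\mu }$. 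The third identity is then immediate: since $\mathbf{x}$ is a fixed vector, $\mathbf{z}-\mathbf{x}\sim N(\bm{\mu }-\mathbf{x},\mathbf{\Sigma })$, so applying the first identity with $\bm{\mu }-\mathbf{x}$ in place of $\bm{\mu }$ yields the claim.

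For the second identity I would decompose $\mathbf{z}=\bm{\mu }+\mathbf{v}$ with $\mathbf{v}\sim N(\mathbf{0},\mathbf{\Sigma })$, so that, using the symmetry of $\mathbf{A}$, $\mathbf{z}^{\prime }\mathbf{A}\mathbf{z}=\bm{\mu }^{\prime }\mathbf{A}\bm{\mu }+2\bm{\mu }^{\prime }\mathbf{A}\mathbf{v}+\mathbf{v}^{\prime }\mathbf{A}\mathbf{v}$, and likewise for $\mathbf{B}$. Multiplying the two expansions produces nine terms; every term of odd total degree in $\mathbf{v}$ has zero expectation because the odd central moments of a Gaussian vanish. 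The surviving pieces are: the product of the two deterministic terms; two ``mean $\times$ quadratic'' cross terms, each handled by the first identity; the term $4E\!\left[(\bm{\mu }^{\prime }\mathbf{A}\mathbf{v})(\mathbf{v}^{\prime }\mathbf{B}\bm{\mu })\right]=4\bm{\mu }^{\prime }\mathbf{A}\mathbf{\Sigma }\mathbf{B}\bm{\mu }$ obtained from $E(\mathbf{v}\mathbf{v}^{\prime })=\mathbf{\Sigma }$; and the fourth-order term $E\!\left[(\mathbf{v}^{\prime }\mathbf{A}\mathbf{v})(\mathbf{v}^{\prime }\mathbf{B}\mathbf{v})\right]$. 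The first four collected together reconstitute $\left[\mathrm{Tr}(\mathbf{A}\mathbf{\Sigma })+\bm{\mu }^{\prime }\mathbf{A}\bm{\mu }\right]\!\left[\mathrm{Tr}(\mathbf{B}\mathbf{\Sigma })+\bm{\mu }^{\prime }\mathbf{B}\bm{\mu }\right]$, so everything hinges on the last term.

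That fourth moment is the main obstacle, and I would evaluate it via Isserlis' (Wick's) theorem: for a zero-mean Gaussian, $E(v_iv_jv_kv_l)=\Sigma_{ij}\Sigma_{kl}+\Sigma_{ik}\Sigma_{jl}+\Sigma_{il}\Sigma_{jk}$, so $E\!\left[(\mathbf{v}^{\prime }\mathbf{A}\mathbf{v})(\mathbf{v}^{\prime }\mathbf{B}\mathbf{v})\right]=\sum_{i,j,k,l}A_{ij}B_{kl}E(v_iv_jv_kv_l)$ splits into three contractions; the first is $\mathrm{Tr}(\mathbf{A}\mathbf{\Sigma })\,\mathrm{Tr}(\mathbf{B}\mathbf{\Sigma })$ and, using the symmetry of $\mathbf{A}$ and $\mathbf{B}$, each of the other two is $\mathrm{Tr}(\mathbf{A}\mathbf{\Sigma }\mathbf{B}\mathbf{\Sigma })$, giving $\mathrm{Tr}(\mathbf{A}\mathbf{\Sigma })\,\mathrm{Tr}(\mathbf{B}\mathbf{\Sigma })+2\,\mathrm{Tr}(\mathbf{A}\mathbf{\Sigma }\mathbf{B}\mathbf{\Sigma })$. (Equivalently, one can set $\mathbf{v}=\mathbf{\Sigma }^{1/2}\mathbf{w}$ with $\mathbf{w}\sim N(\mathbf{0},\mathbf{I})$, reducing to the identity-covariance case where only $E(w_i^4)=3$ and $E(w_i^2w_j^2)=1$ for $i\neq j$ are needed.) Adding $2\,\mathrm{Tr}(\mathbf{A}\mathbf{\Sigma }\mathbf{B}\mathbf{\Sigma })=2\,\mathrm{Tr}(\mathbf{\Sigma }\mathbf{A}\mathbf{\Sigma }\mathbf{B})$ and $4\bm{\mu }^{\prime }\mathbf{A}\mathbf{\Sigma }\mathbf{B}\bm{\mu }$ to the reconstituted product gives the stated formula; the only real care required is bookkeeping of the nine cross terms and repeated use of the cyclic property of the trace.
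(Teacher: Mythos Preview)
Your argument is correct and self-contained, whereas the paper does not prove this lemma at all: its proof is simply ``See \citet{Ullah2004}.'' So there is nothing to compare at the level of technique; you supply the standard trace/Isserlis computation that the paper outsources to a reference.

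One small bookkeeping slip is worth fixing. You write that ``the first four collected together reconstitute $\left[\mathrm{Tr}(\mathbf{A}\mathbf{\Sigma })+\bm{\mu }^{\prime }\mathbf{A}\bm{\mu }\right]\left[\mathrm{Tr}(\mathbf{B}\mathbf{\Sigma })+\bm{\mu }^{\prime }\mathbf{B}\bm{\mu }\right]$,'' but that is not quite right: the term $4\bm{\mu }^{\prime }\mathbf{A}\mathbf{\Sigma }\mathbf{B}\bm{\mu }$ is \emph{not} part of that product, while the piece $\mathrm{Tr}(\mathbf{A}\mathbf{\Sigma })\,\mathrm{Tr}(\mathbf{B}\mathbf{\Sigma })$ that \emph{is} needed to complete the product comes from the fourth-order term you compute afterwards. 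The correct grouping is: the deterministic product, the two mean-times-quadratic terms, and the $\mathrm{Tr}(\mathbf{A}\mathbf{\Sigma })\,\mathrm{Tr}(\mathbf{B}\mathbf{\Sigma })$ contribution from the Isserlis expansion together give the bracketed product; the leftover $2\,\mathrm{Tr}(\mathbf{\Sigma }\mathbf{A}\mathbf{\Sigma }\mathbf{B})$ and $4\bm{\mu }^{\prime }\mathbf{A}\mathbf{\Sigma }\mathbf{B}\bm{\mu }$ are the two extra terms in the statement. Your final total is correct either way, so this is only an expository fix, not a mathematical gap.
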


\begin{proof}
See \citet{Ullah2004}.
\end{proof}

\begin{lemma}
\label{lemma2}Let $\mathbf{z}$ be a $n$-dimensional vector of non-normal
random variables with mean $\bm{\mu }$ and covariance $\mathbf{\Sigma}
=diag(\sigma _{1}^{2},...,\sigma _{n}^{2})$, let $\mathbf{x}$ be a $n$%
-dimensional non-random vector, $\mathbf{A}$, $\mathbf{B}$ two $n\times n$
symmetric matrices, and:%
\begin{eqnarray*}
\mathbf{\Lambda }_{3} &=&diag\left\{ E\left( z_{1}-\mu _{1}\right)
^{3}/\left( \sigma _{1}^{2}\right) ^{3/2},...,E\left( z_{n}-\mu _{n}\right)
^{3}/\left( \sigma _{n}^{2}\right) ^{3/2}\right\} , \\
\mathbf{\Lambda }_{4} &=& diag\left\{ E\left( z_{1}-\mu _{1}\right)
^{4}/\left( \sigma _{1}^{2}\right) ^{2}-3,...,E\left( z_{n}-\mu _{n}\right)
^{4}/\left( \sigma _{n}^{2}\right) ^{2}-3\right\} .
\end{eqnarray*}%
Then:%
\begin{eqnarray}
E\left[ \left( \bm{z-\mu }\right) ^{\prime }\bm{A\left( \bm{%
z-\mu }\right) }\left( \bm{z-\mu }\right) ^{\prime }\mathbf{B}\left(
\bm{z-\mu }\right) \right] &=&Tr\left( \mathbf{\Sigma }^{1/2}\mathbf{%
B\Sigma }^{1/2}\mathbf{\Lambda }_{4}diag\left( \mathbf{\mathbf{\Sigma }}%
^{1/2}\mathbf{A\Sigma }^{1/2}\right) \right)  \notag \\
&+&Tr(\mathbf{A\Sigma })Tr(\mathbf{B\Sigma })+2Tr\left( \mathbf{\mathbf{%
\Sigma }B\mathbf{\Sigma }A}\right)  \label{ullah3} \\
E\left[ \left( \bm{z-\mu }\right) ^{\prime }\mathbf{A\left( \bm{%
z-\mu }\right) }\left( \bm{z-\mu }\right) \right] &=&diag(\mathbf{%
\Lambda }_{3}\mathbf{\Sigma }^{1/2}\mathbf{A\mathbf{\Sigma }}^{1/2}\mathbf{)%
\mathbf{\Sigma }}^{1/2}\mathbf{1}_{n}.  \label{ullah4}
\end{eqnarray}
\end{lemma}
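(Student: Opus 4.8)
The plan is to reduce both displays to elementary moment identities for a standardized vector with independent coordinates, and then translate the result back into the stated parametrization. First I would put $\mathbf{v}=\mathbf{z}-\bm{\mu}$ and $\mathbf{w}=\mathbf{\Sigma}^{-1/2}\mathbf{v}$, so that $w_{1},\dots,w_{n}$ are independent with $E(w_{i})=0$, $E(w_{i}^{2})=1$, $E(w_{i}^{3})=\lambda_{3,i}$ and $E(w_{i}^{4})=\lambda_{4,i}+3$, where $\lambda_{3,i}$ and $\lambda_{4,i}$ are the $i$th diagonal entries of $\mathbf{\Lambda}_{3}$ and $\mathbf{\Lambda}_{4}$. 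Only the diagonal covariance together with independence of the coordinates is used here — the latter being precisely the decoupling induced by \eqref{ass1} in our setting — since the argument needs all mixed third- and fourth-order moments of $\mathbf{v}$ to vanish. Writing $\tilde{\mathbf{A}}=\mathbf{\Sigma}^{1/2}\mathbf{A}\mathbf{\Sigma}^{1/2}$ and $\tilde{\mathbf{B}}=\mathbf{\Sigma}^{1/2}\mathbf{B}\mathbf{\Sigma}^{1/2}$, which are symmetric, we have $(\mathbf{z}-\bm{\mu})^{\prime}\mathbf{A}(\mathbf{z}-\bm{\mu})=\mathbf{w}^{\prime}\tilde{\mathbf{A}}\mathbf{w}$, likewise for $\mathbf{B}$, and $\mathbf{z}-\bm{\mu}=\mathbf{\Sigma}^{1/2}\mathbf{w}$.

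For \eqref{ullah3} I would expand $E[\mathbf{w}^{\prime}\tilde{\mathbf{A}}\mathbf{w}\,\mathbf{w}^{\prime}\tilde{\mathbf{B}}\mathbf{w}]=\sum_{i,j,k,l}\tilde{a}_{ij}\tilde{b}_{kl}E[w_{i}w_{j}w_{k}w_{l}]$ and note that, by independence and zero means, $E[w_{i}w_{j}w_{k}w_{l}]$ is nonzero only when all four indices coincide, contributing $E(w_{i}^{4})$, or when they split into two coincident pairs, contributing $1$. Enumerating the three ways of pairing the positions $(i,j,k,l)$ produces the restricted sums $\sum_{i\neq k}\tilde{a}_{ii}\tilde{b}_{kk}$, $\sum_{i\neq j}\tilde{a}_{ij}\tilde{b}_{ij}$ and $\sum_{i\neq j}\tilde{a}_{ij}\tilde{b}_{ji}$; completing each of these to an unrestricted sum and using the symmetry of $\tilde{\mathbf{B}}$ turns their total into $(\mathrm{Tr}\,\tilde{\mathbf{A}})(\mathrm{Tr}\,\tilde{\mathbf{B}})+2\,\mathrm{Tr}(\tilde{\mathbf{A}}\tilde{\mathbf{B}})-3\sum_{i}\tilde{a}_{ii}\tilde{b}_{ii}$, and adding back the all-equal contribution $\sum_{i}\tilde{a}_{ii}\tilde{b}_{ii}E(w_{i}^{4})$ leaves $(\mathrm{Tr}\,\tilde{\mathbf{A}})(\mathrm{Tr}\,\tilde{\mathbf{B}})+2\,\mathrm{Tr}(\tilde{\mathbf{A}}\tilde{\mathbf{B}})+\sum_{i}\tilde{a}_{ii}\tilde{b}_{ii}(E(w_{i}^{4})-3)$. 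It then remains to recognise $\mathrm{Tr}\,\tilde{\mathbf{A}}=\mathrm{Tr}(\mathbf{A}\mathbf{\Sigma})$, $\mathrm{Tr}(\tilde{\mathbf{A}}\tilde{\mathbf{B}})=\mathrm{Tr}(\mathbf{\Sigma}\mathbf{B}\mathbf{\Sigma}\mathbf{A})$ by cyclicity of the trace, and $\sum_{i}\tilde{a}_{ii}\tilde{b}_{ii}\lambda_{4,i}=\mathrm{Tr}\big(\tilde{\mathbf{B}}\,\mathbf{\Lambda}_{4}\,\mathrm{diag}(\tilde{\mathbf{A}})\big)$, which is exactly the right-hand side of \eqref{ullah3}.

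For \eqref{ullah4} I would argue componentwise: the $k$th entry of $E[\mathbf{w}^{\prime}\tilde{\mathbf{A}}\mathbf{w}\,\mathbf{w}]$ equals $\sum_{i,j}\tilde{a}_{ij}E[w_{i}w_{j}w_{k}]$, and by independence and zero means this reduces to $\tilde{a}_{kk}E(w_{k}^{3})=\tilde{a}_{kk}\lambda_{3,k}$; in vector form this is $\mathrm{diag}(\mathbf{\Lambda}_{3}\tilde{\mathbf{A}})\mathbf{1}_{n}$, and premultiplying by $\mathbf{\Sigma}^{1/2}$, which commutes with the diagonal factor, gives $\mathrm{diag}(\mathbf{\Lambda}_{3}\mathbf{\Sigma}^{1/2}\mathbf{A}\mathbf{\Sigma}^{1/2})\mathbf{\Sigma}^{1/2}\mathbf{1}_{n}$, as claimed.

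The argument is essentially bookkeeping and contains no estimation step or inequality. The main point requiring care is the combinatorics of the fourth-order moment in \eqref{ullah3}: correctly listing the surviving index configurations, handling the resulting $-3$ over-correction on the diagonal, and matching the completed unrestricted sums to the trace and $\mathrm{diag}(\cdot)$ expressions; once this bookkeeping is done the two displays follow immediately.
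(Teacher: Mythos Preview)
Your argument is correct. The paper does not actually prove this lemma: its proof reads in full ``See \citet{Wiens1992} and \citet{Ullah2004}.'' You instead give a self-contained elementary derivation by standardising to $\mathbf{w}=\mathbf{\Sigma}^{-1/2}(\mathbf{z}-\bm{\mu})$ and expanding the quadratic forms in coordinates, which is precisely how these identities are obtained in the cited references. Your bookkeeping for the fourth-moment case is accurate, including the $-3$ diagonal correction that produces the excess-kurtosis term, and the third-moment case is handled cleanly.

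One point worth flagging explicitly, which you already note: the lemma as stated only assumes a diagonal covariance, but your proof (and the results in \citet{Ullah2004}) require independence of the coordinates so that all mixed third- and fourth-order central moments vanish. This is not a flaw in your argument but rather an implicit hypothesis in the lemma itself; in the paper's application the relevant vector is $(\mathbf{y}_r^\ast-\mathbf{X}_r\bm{\beta})\mid\mathbf{y}_r$ under the mean-field approximation \eqref{ass1}, which supplies exactly that decoupling.
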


\begin{proof}
See \citet{Wiens1992} and \citet{Ullah2004}.
\end{proof}

We observe that in the case in which $\mathbf{A}$, $\mathbf{B}$ are
asymmetric, results in the above two Lemmas still hold with $\mathbf{\left(
\mathbf{A+A}^{\prime }\right) }/2$ and $\left( \mathbf{B+B}^{\prime }\right)
/2$ in place of $\mathbf{A}$ and $\mathbf{B}$.

\begin{corollary}
\label{cor1}Under the conditions of Lemma \ref{lemma2}, let $\mathbf{x}$ be
a $n$-dimensional non-random vector. Then:%
\begin{eqnarray}
&&E\left[ \left( \mathbf{z-x}\right) ^{\prime }\mathbf{A\left( \mathbf{z-x}%
\right) }\cdot \left( \mathbf{z-x}\right) ^{\prime }\mathbf{B\left( \mathbf{%
z-x}\right) }\right]  \notag  \label{ex0} \\
&=&Tr\left[ \mathbf{\Sigma }^{1/2}\mathbf{B\Sigma }^{1/2}\mathbf{\Lambda }%
_{4}diag\left( \mathbf{\mathbf{\Sigma }}^{1/2}\mathbf{A\Sigma }^{1/2}\right) %
\right] +Tr(\mathbf{A\Sigma })Tr(\mathbf{B\Sigma })+2Tr\left( \mathbf{%
\mathbf{\Sigma }B\mathbf{\Sigma }A}\right)  \notag  \label{ex1} \\
&+&2\left[ diag(\mathbf{\Lambda }_{3}\mathbf{\Sigma }^{1/2}\mathbf{A\mathbf{%
\Sigma }}^{1/2}\mathbf{)1}_{n}\right] ^{\prime }\mathbf{\Sigma }^{1/2}%
\mathbf{B}\left( \bm{\mu -x}\right)  \notag  \label{ex2} \\
&+&2\left[ diag(\mathbf{\Lambda }_{3}\mathbf{\Sigma }^{1/2}\mathbf{B\mathbf{%
\Sigma }}^{1/2}\mathbf{)1}_{n}\right] ^{\prime }\mathbf{\Sigma }^{1/2}%
\mathbf{A}\left( \bm{\mu -x}\right)  \notag  \label{ex3} \\
&+&Tr\left( \mathbf{\Sigma A}\right) \left( \bm{\mu -x}\right) ^{\prime }%
\mathbf{B}\left( \bm{\mu -x}\right) +Tr\left( \mathbf{\Sigma B}\right)
\left( \bm{\mu -x}\right) ^{\prime }\mathbf{A}\left( \bm{\mu -x}%
\right)  \notag \\
&+&4\left( \bm{\mu -x}\right) ^{\prime }\mathbf{A}\mathbf{\Sigma} \mathbf{B}%
\left( \bm{\mu -x}\right) +\left( \bm{\mu -x}\right) ^{\prime }%
\mathbf{A}\left( \bm{\mu -x}\right) \left( \bm{\mu -x}\right)
^{\prime }\mathbf{B}\left( \bm{\mu -x}\right) .  \label{ex5}
\end{eqnarray}%
and%
\begin{eqnarray}
E\left[ \left( \mathbf{z-x}\right) ^{\prime }\mathbf{A\left( \mathbf{z-x}%
\right) }\left( \mathbf{z-x}\right) \right] &=&diag(\mathbf{\Lambda }_{3}%
\mathbf{\Sigma }^{1/2}\mathbf{A\mathbf{\Sigma }}^{1/2}\mathbf{)\mathbf{%
\Sigma }}^{1/2}\mathbf{1}_{n}+Tr\left( \mathbf{A\mathbf{\Sigma }}\right)
\left( \bm{\mu -x}\right)  \notag  \label{ex6} \\
&+&2\mathbf{\mathbf{\Sigma }A}\left( \bm{\mu -x}\right) +\left( \bm{%
\mu -x}\right) ^{\prime }\mathbf{A}\left( \bm{\mu -x}\right) \left(
\bm{\mu -x}\right).  \label{ex7}
\end{eqnarray}
\end{corollary}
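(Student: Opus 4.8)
The plan is to reduce the statement to the centered identities already established in Lemma~\ref{lemma2}. Write $\mathbf{z}-\mathbf{x} = \mathbf{w}+\mathbf{c}$ with $\mathbf{w} := \mathbf{z}-\bm{\mu}$, so that $\mathbf{w}$ has mean $\mathbf{0}$, diagonal covariance $\mathbf{\Sigma}$, and standardized third and fourth moments collected in $\mathbf{\Lambda}_{3}$ and $\mathbf{\Lambda}_{4}$, while $\mathbf{c} := \bm{\mu}-\mathbf{x}$ is non-random. Using the symmetry of $\mathbf{A}$, expand the quadratic form as $(\mathbf{z}-\mathbf{x})^{\prime}\mathbf{A}(\mathbf{z}-\mathbf{x}) = \mathbf{w}^{\prime}\mathbf{A}\mathbf{w} + 2\mathbf{c}^{\prime}\mathbf{A}\mathbf{w} + \mathbf{c}^{\prime}\mathbf{A}\mathbf{c}$, and likewise for $\mathbf{B}$.

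For the first identity I would multiply these two expansions out into nine terms and take expectations termwise. The degree-four term $E[\mathbf{w}^{\prime}\mathbf{A}\mathbf{w}\,\mathbf{w}^{\prime}\mathbf{B}\mathbf{w}]$ is exactly (\ref{ullah3}) and produces the $\mathbf{\Lambda}_{4}$-trace together with $Tr(\mathbf{A}\mathbf{\Sigma})Tr(\mathbf{B}\mathbf{\Sigma})+2Tr(\mathbf{\Sigma}\mathbf{B}\mathbf{\Sigma}\mathbf{A})$. The two degree-three terms, $2E[(\mathbf{w}^{\prime}\mathbf{A}\mathbf{w})(\mathbf{c}^{\prime}\mathbf{B}\mathbf{w})]$ and $2E[(\mathbf{c}^{\prime}\mathbf{A}\mathbf{w})(\mathbf{w}^{\prime}\mathbf{B}\mathbf{w})]$, are handled by pulling the constant vector outside the expectation and applying (\ref{ullah4}) to $E[(\mathbf{w}^{\prime}\mathbf{A}\mathbf{w})\mathbf{w}]$ (and its analogue with $\mathbf{B}$), then transposing via the symmetry of $\mathbf{B}$ (resp. $\mathbf{A}$); this yields the two $\mathbf{\Lambda}_{3}$ terms. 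The three degree-two terms reduce, via $E[\mathbf{w}^{\prime}\mathbf{A}\mathbf{w}] = Tr(\mathbf{A}\mathbf{\Sigma})$ and $E[\mathbf{w}\mathbf{w}^{\prime}] = \mathbf{\Sigma}$ from Lemma~\ref{lemma1}, to $(\mathbf{c}^{\prime}\mathbf{B}\mathbf{c})Tr(\mathbf{A}\mathbf{\Sigma}) + (\mathbf{c}^{\prime}\mathbf{A}\mathbf{c})Tr(\mathbf{B}\mathbf{\Sigma}) + 4\mathbf{c}^{\prime}\mathbf{A}\mathbf{\Sigma}\mathbf{B}\mathbf{c}$. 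The two degree-one terms vanish because $E[\mathbf{w}] = \mathbf{0}$, and the degree-zero term is $(\mathbf{c}^{\prime}\mathbf{A}\mathbf{c})(\mathbf{c}^{\prime}\mathbf{B}\mathbf{c})$. Collecting and substituting $\mathbf{c} = \bm{\mu}-\mathbf{x}$ reproduces (\ref{ex5}).

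For the second identity I would expand $(\mathbf{z}-\mathbf{x})^{\prime}\mathbf{A}(\mathbf{z}-\mathbf{x})(\mathbf{z}-\mathbf{x}) = (\mathbf{w}^{\prime}\mathbf{A}\mathbf{w} + 2\mathbf{c}^{\prime}\mathbf{A}\mathbf{w} + \mathbf{c}^{\prime}\mathbf{A}\mathbf{c})(\mathbf{w}+\mathbf{c})$ into six terms: $E[(\mathbf{w}^{\prime}\mathbf{A}\mathbf{w})\mathbf{w}]$ is (\ref{ullah4}); $E[\mathbf{w}^{\prime}\mathbf{A}\mathbf{w}]\mathbf{c} = Tr(\mathbf{A}\mathbf{\Sigma})\mathbf{c}$; $2E[(\mathbf{c}^{\prime}\mathbf{A}\mathbf{w})\mathbf{w}] = 2\mathbf{\Sigma}\mathbf{A}\mathbf{c}$ using $E[\mathbf{w}\mathbf{w}^{\prime}] = \mathbf{\Sigma}$ and symmetry of $\mathbf{A}$; the terms $2(\mathbf{c}^{\prime}\mathbf{A}E[\mathbf{w}])\mathbf{c}$ and $(\mathbf{c}^{\prime}\mathbf{A}\mathbf{c})E[\mathbf{w}]$ vanish; and the last term is $(\mathbf{c}^{\prime}\mathbf{A}\mathbf{c})\mathbf{c}$. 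Summing and substituting back $\mathbf{c} = \bm{\mu}-\mathbf{x}$ gives (\ref{ex7}).

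I expect the only real difficulty to be bookkeeping: one must keep consistent track of which of $\mathbf{A}$, $\mathbf{B}$ is transposed against the $\mathbf{\Lambda}_{3}$-vector and verify that the $\mathbf{\Sigma}^{1/2}$ factors align with the exact forms written in Lemma~\ref{lemma2}. Any residual asymmetry of $\mathbf{A}$ or $\mathbf{B}$ can be absorbed using the remark following Lemma~\ref{lemma2}, so no computation beyond Lemmas~\ref{lemma1}--\ref{lemma2} is required.
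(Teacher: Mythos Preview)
Your proposal is correct and follows essentially the same route as the paper: the paper likewise writes $(\mathbf{z}-\mathbf{x})=(\mathbf{z}-\bm{\mu})+(\bm{\mu}-\mathbf{x})$, expands each quadratic form, multiplies out, takes expectations termwise, and then invokes (\ref{ullah3}) and (\ref{ullah4}) from Lemma~\ref{lemma2} together with the elementary moments from Lemma~\ref{lemma1}. The only cosmetic difference is that you introduce the shorthand $\mathbf{w},\mathbf{c}$ and organize the bookkeeping by degree in $\mathbf{w}$, whereas the paper writes the expansion out in full before simplifying.
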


\begin{proof}
To show the above results, first note that we can write:%
\begin{equation*}
\left( \mathbf{z-x}\right) ^{\prime }\mathbf{A\left( \mathbf{z-x}\right) }%
=\left( \bm{z-\mu }\right) ^{\prime }\mathbf{A}\left( \bm{z-\mu }%
\right) +2\left( \bm{z-\mu }\right) ^{\prime }\mathbf{A}\left( \bm{%
\mu -x}\right) +\left( \bm{\mu -x}\right) ^{\prime }\mathbf{A}\left(
\bm{\mu -x}\right).
\end{equation*}%
Hence:%
\begin{eqnarray*}
&&\left( \mathbf{z-x}\right) ^{\prime }\mathbf{A\left( \mathbf{z-x}\right) }%
\left( \mathbf{z-x}\right) ^{\prime }\mathbf{B\left( \mathbf{z-x}\right) } \\
&=&\left( \bm{z-\mu }\right) ^{\prime }\mathbf{A}\left( \bm{z-\mu }%
\right) \left[ \left( \bm{z-\mu }\right) ^{\prime }\mathbf{B}\left(
\bm{z-\mu }\right) +2\left( \bm{z-\mu }\right) ^{\prime }\mathbf{B}%
\left( \bm{\mu -x}\right) +\left( \bm{\mu -x}\right) ^{\prime }%
\mathbf{B}\left( \bm{\mu -x}\right) \right] \\
&+&2\left( \bm{z-\mu }\right) ^{\prime }\mathbf{A}\left( \bm{\mu -x}%
\right) \left[ \left( \bm{z-\mu }\right) ^{\prime }\mathbf{B}\left(
\bm{z-\mu }\right) +2\left( \bm{z-\mu }\right) ^{\prime }\mathbf{B}%
\left( \bm{\mu -x}\right) +\left( \bm{\mu -x}\right) ^{\prime }%
\mathbf{B}\left( \bm{\mu -x}\right) \right] \\
&+&\left( \bm{\mu -x}\right) ^{\prime }\mathbf{A}\left( \bm{\mu -x}%
\right) \left[ \left( \bm{z-\mu }\right) ^{\prime }\mathbf{B}\left(
\bm{z-\mu }\right) +2\left( \bm{z-\mu }\right) ^{\prime }\mathbf{B}%
\left( \bm{\mu -x}\right) +\left( \bm{\mu -x}\right) ^{\prime }%
\mathbf{B}\left( \bm{\mu -x}\right) \right].
\end{eqnarray*}%
So that we obtain%
\begin{eqnarray*}
&&E\left[ \left( \mathbf{z-x}\right) ^{\prime }\mathbf{A\left( \mathbf{z-x}%
\right) }\left( \mathbf{z-x}\right) ^{\prime }\mathbf{B\left( \mathbf{z-x}%
\right) }\right] \\
&=&E\left[ \left( \bm{z-\mu }\right) ^{\prime }\mathbf{A}\left( \bm{%
z-\mu }\right) \left( \bm{z-\mu }\right) ^{\prime }\mathbf{B}\left(
\bm{z-\mu }\right) \right] +2E\left[ \left( \bm{z-\mu }\right)
^{\prime }\mathbf{A}\left( \bm{z-\mu }\right) \left( \bm{z-\mu }%
\right) ^{\prime }\mathbf{B}\left( \bm{\mu -x}\right) \right] \\
&+&Tr\left( \mathbf{\Sigma A}\right) \left( \bm{\mu -x}\right) ^{\prime }%
\mathbf{B}\left( \bm{\mu -x}\right) +2E\left[ \left( \bm{z-\mu }%
\right) ^{\prime }\mathbf{A}\left( \bm{\mu -x}\right) \left( \bm{%
z-\mu }\right) ^{\prime }\mathbf{B}\left( \bm{z-\mu }\right) \right] \\
&+&4\left( \bm{\mu -x}\right) ^{\prime }\mathbf{A\Sigma B}\left( \bm{%
\mu -x}\right) +\left( \bm{\mu -x}\right) ^{\prime }\mathbf{A}\left(
\bm{\mu -x}\right) Tr\left( \mathbf{\Sigma B}\right) \\
&+&\left( \bm{\mu
-x}\right) ^{\prime }\mathbf{A}\left( \bm{\mu -x}\right) \left( \bm{%
\mu -x}\right) ^{\prime }\mathbf{B}\left( \bm{\mu -x}\right).
\end{eqnarray*}%
Simplifying and substituting results (\ref{ullah3}) and (\ref{ullah4}) in
the above we obtain (\ref{ex5}). Similarly, we have for (\ref{ex7}):%
\begin{eqnarray*}
&&\left( \mathbf{z-x}\right) ^{\prime }\mathbf{A\left( \mathbf{z-x}\right) B}%
\left( \mathbf{z-x}\right) \\
&=&\left( \bm{z-\mu }\right) ^{\prime }\mathbf{A}\left( \bm{z-\mu }%
\right) \mathbf{B}\left( \bm{z-\mu }\right) +\left( \bm{z-\mu }%
\right) ^{\prime }\mathbf{A}\left( \bm{z-\mu }\right) \mathbf{B}\left(
\bm{\mu -x}\right) \\
&&+2\left( \bm{z-\mu }\right) ^{\prime }\mathbf{A}\left( \bm{\mu -x}%
\right) \mathbf{B}\left( \bm{z-\mu }\right) +2\left( \bm{z-\mu }%
\right) ^{\prime }\mathbf{A}\left( \bm{\mu -x}\right) \mathbf{B}\left(
\bm{\mu -x}\right) \\
&&+\left( \bm{\mu -x}\right) ^{\prime }\mathbf{A}\left( \bm{\mu -x}%
\right) \mathbf{B}\left( \bm{z-\mu }\right) +\left( \bm{\mu -x}%
\right) ^{\prime }\mathbf{A}\left( \bm{\mu -x}\right) \mathbf{B}\left(
\bm{\mu -x}\right),
\end{eqnarray*}%
leading to%
\begin{eqnarray*}
&&E\left[ \left( \mathbf{z-x}\right) ^{\prime }\mathbf{A\left( \mathbf{z-x}%
\right) B}\left( \mathbf{z-x}\right) \right] \\
&=&E\left[ \left( \bm{z-\mu }\right) ^{\prime }\mathbf{A}\left( \bm{%
z-\mu }\right) \mathbf{B}\left( \bm{z-\mu }\right) \right] +Tr\left(
\mathbf{A\mathbf{\Sigma }}\right) \mathbf{B}\left( \bm{\mu -x}\right) +2%
\mathbf{B\mathbf{\Sigma }A}\left( \bm{\mu -x}\right) \\
&&+\left( \bm{\mu -x}\right) ^{\prime }\mathbf{A}\left( \bm{\mu -x}%
\right) \mathbf{B}\left( \bm{\mu -x}\right) .
\end{eqnarray*}%
Noting that $E\left[ \left( \bm{z-\mu }\right) ^{\prime }\mathbf{A\left(
\bm{z-\mu }\right) }\left( \bm{z-\mu }\right) \right] =diag(\mathbf{%
\Lambda }_{3}\mathbf{\Sigma }^{1/2}\mathbf{A\mathbf{\Sigma }}^{1/2}\mathbf{)%
\mathbf{\Sigma }}^{1/2}\mathbf{1}_{n}$, we obtain (\ref{ex7}).
\end{proof}

\renewcommand{\thesection}{2}

\section*{Appendix D: Asymptotic standard errors of ML estimates} \label{app3}
       \renewcommand{\theequation}{D.\arabic{equation}}
       \renewcommand{\thesection}{D}
       \setcounter{equation}{0}

       \medskip

Let $\mathbf{X,A,B}$ be square $n\times n$ matrices, with $\mathbf{X}$
nonsingular, $\mathbf{a}$, $\mathbf{b}$ two $n$-dimensional vectors, and $%
f\left( .\right) $ a scalar function. In the rest of this section we use the
following results on derivatives (\citealp{Bernstein2005})%
\begin{eqnarray*}
\frac{\partial \log \left\vert \mathbf{X}\right\vert }{\partial \mathbf{X}}
&=&\mathbf{X}^{-1}, \\
\frac{\partial \mathbf{a}^{\prime }\mathbf{Xb}}{\partial \mathbf{X}} &=&%
\mathbf{ab}^{\prime }, \\
\frac{\partial Tr\left( \mathbf{AX}^{-1}\mathbf{B}\right) }{\partial \mathbf{%
X}} &=&-\mathbf{X}^{-1}\mathbf{BAX}^{-1}, \\
\frac{df\left( \mathbf{X}\right) }{dx_{ij}} &=&Tr\left[ \left( \frac{%
\partial f}{\partial \mathbf{X}}\right) ^{\prime }\frac{\partial \mathbf{X}}{%
\partial x_{ij}}\right] .
\end{eqnarray*}%
In addition, we will use the following notation:%
\begin{eqnarray*}
\mathbf{\Sigma }_{r} &=&{\rm Var}\left[ \left( \mathbf{y}_{r}^{\ast }-\mathbf{X}%
_{r}\bm{\beta }\right) \left( \mathbf{y}_{r}^{\ast }-\mathbf{X}_{r}%
\bm{\beta }\right) ^{\prime }|\mathbf{y}_{r}\right] \\
\mathbf{A}^{gh} &=&\mathbf{Z}_{r}\mathbf{M_{r}}^{-1}\mathbf{J}^{gh}\mathbf{M_{r}}^{-1}\mathbf{Z%
}_{r}^{\prime },
\end{eqnarray*}%
where $\mathbf{J}^{gh}$ is a $G\times G$ matrix of zeros with 1 on the ($g,h$%
) and ($h,g$) positions and $\mathbf{M_{r}}={\rm diag}(m_{1r},m_{2r},...,m_{Gr})$.

Consider:%
\begin{eqnarray*}
\log f_{\mathbf{y},\mathbf{y}^{\ast },\mathbf{u}}\left( \mathbf{y},\mathbf{y}%
^{\ast },\mathbf{u}|\bm{\vartheta }\right) &\approx &\frac{R}{2}\log
\left\vert \mathbf{\Phi }_{G}\right\vert -\frac{1}{2}\sum_{r=1}^{R}\mathbf{u}%
_{r}^{\prime }\mathbf{\Phi }_{G}\mathbf{u}_{r} \\
&&-\frac{1}{2}\sum_{r=1}^{R}\left( \mathbf{y}_{r}^{\ast }-\mathbf{X}_{r}%
\bm{\beta }-\mathbf{Z}_{r}\mathbf{u}_{r}\right) ^{\prime }\left( \mathbf{%
y}_{r}^{\ast }-\mathbf{X}_{r}\bm{\beta }-\mathbf{Z}_{r}\mathbf{u}%
_{r}\right) .
\end{eqnarray*}%
For ease of exposition, in the following we write $\log f_{\mathbf{y},%
\mathbf{y}^{\ast },\mathbf{u}}\left( \mathbf{y},\mathbf{y}^{\ast },\mathbf{u}%
|\bm{\vartheta }\right) $ as $\log f_{\mathbf{y},\mathbf{y}^{\ast },%
\mathbf{u}}$. The first and second derivatives of $\log f_{\mathbf{y},%
\mathbf{y}^{\ast },\mathbf{u}}$ with respect to $\bm{\vartheta }$ are:%
\begin{eqnarray*}
\frac{\partial \log f_{\mathbf{y},\mathbf{y}^{\ast },\mathbf{u}}}{\partial
\bm{\beta }} &=&\sum_{r=1}^{R}\mathbf{X}_{r}^{\prime }\left( \mathbf{y}%
_{r}^{\ast }-\mathbf{X}_{r}\bm{\beta }-\mathbf{Z}_{r}\mathbf{u}%
_{r}\right) , \\
\frac{\partial \log f_{\mathbf{y},\mathbf{y}^{\ast },\mathbf{u}}}{\partial
\phi _{gh}} &=&\frac{R}{2}Tr\left( \mathbf{\Phi }_{G}^{-1}\mathbf{J}%
^{gh}\right) -\sum_{r=1}^{R}u_{gr}u_{hr}, \\
\frac{\partial ^{2}\log f_{\mathbf{y},\mathbf{y}^{\ast },\mathbf{u}}}{%
\partial\beta \partial \beta'} &=&-\sum_{r=1}^{R}\mathbf{%
X}_{r}^{\prime }\mathbf{X}_{r}, \\
\frac{\partial \log f_{\mathbf{y},\mathbf{y}^{\ast },\mathbf{u}}}{\partial
\phi _{gh}\partial \phi _{k\ell }} &=&-\frac{R}{2}Tr\left( \mathbf{\Phi }%
_{G}^{-1}\mathbf{J}^{gh}\mathbf{\Phi }_{G}^{-1}\mathbf{J}^{k\ell }\right) ,
\\
\frac{\partial ^{2}\log f_{\mathbf{y},\mathbf{y}^{\ast },\mathbf{u}}}{%
\partial \bm{\beta }\partial \phi _{gh}} &=&\mathbf{0.}
\end{eqnarray*}%
Standard errors of $\widehat{\bm{\beta }}$ can be obtained by
substituting the formulas above in (\ref{B}). In particular, let $B\left(
\mathbf{y}|\bm{\vartheta }\right) $ in (\ref{B}) be structured as
follows:%
\begin{equation*}
B\left( \mathbf{y}|\bm{\vartheta }\right) =\left(
\begin{tabular}{l|lll}
$B_{\bm{\beta \beta }}$ & $B_{\bm{\beta }\phi _{11}}$ & $...$ & $B_{%
\bm{\beta }\phi _{GG}}$ \\ \hline
$B_{\bm{\beta }\phi _{11}}$ & $B_{\phi _{11}\phi _{11}}$ & $...$ & $%
B_{\phi _{11}\phi _{GG}}$ \\
$...$ & $...$ & $...$ & $...$ \\
$B_{\bm{\beta }\phi _{GG}}$ & $B_{\phi _{11}\phi _{GG}}$ & $...$ & $%
B_{\phi _{GG}\phi _{GG}}$%
\end{tabular}%
\right) .
\end{equation*}%
It is easy to see that%
\begin{eqnarray*}
B_{\bm{\beta \beta }} &=&-\sum_{r=1}^{R}\mathbf{X}_{r}^{\prime }\mathbf{X%
}_{r}+\sum_{r=1}^{R}\mathbf{X}_{r}^{\prime }E\left[ \left( \mathbf{y}%
_{r}^{\ast }-\mathbf{X}_{r}\bm{\beta }-\mathbf{Z}_{r}\mathbf{u}%
_{r}\right) \left( \mathbf{y}_{r}^{\ast }-\mathbf{X}_{r}\bm{\beta }-%
\mathbf{Z}_{r}\mathbf{u}_{r}\right) ^{\prime }|\mathbf{y}_{r}\right] \mathbf{%
X}_{r} \\
&&-\sum_{r=1}^{R}\mathbf{X}_{r}^{\prime }E\left[ \left( \mathbf{y}_{r}^{\ast
}-\mathbf{X}_{r}\bm{\beta }-\mathbf{Z}_{r}\mathbf{u}_{r}\right) |\mathbf{%
y}_{r}\right] E\left[ \left( \mathbf{y}_{r}^{\ast }-\mathbf{X}_{r}\bm{%
\beta }-\mathbf{Z}_{r}\mathbf{u}_{r}\right) ^{\prime }|\mathbf{y}_{r}\right]
\mathbf{X}_{r}, \\
B_{\bm{\beta }\phi _{gh}} &=&-\sum_{r=1}^{R}E\left[ u_{gr}u_{hr}\mathbf{X%
}_{r}^{\prime }\left( \mathbf{y}_{r}^{\ast }-\mathbf{X}_{r}\bm{\beta }-%
\mathbf{Z}_{r}\mathbf{u}_{r}\right) |\mathbf{y}_{r}\right] \\
&&+\frac{1}{2}\sum_{r=1}^{R}E\left( u_{gr}u_{hr}|\mathbf{y}\right) \mathbf{X}%
_{r}^{\prime }E\left[ \left( \mathbf{y}_{r}^{\ast }-\mathbf{X}_{r}\bm{%
\beta }-\mathbf{Z}_{r}\mathbf{u}_{r}\right) |\mathbf{y}_{r}\right] , \\
B_{\phi _{gh}\phi _{k\ell }} &=&-\frac{R}{2}Tr\left( \mathbf{\Phi }_{G}^{-1}%
\mathbf{J}^{gh}\mathbf{\Phi }_{G}^{-1}\mathbf{J}^{k\ell }\right)
+\sum_{r=1}^{R}E\left( u_{gr}u_{hr}u_{kr}u_{h\ell }|\mathbf{y}_{r}\right) \\
&&-\sum_{r=1}^{R}E\left( u_{gr}u_{hr}|\mathbf{y}_{r}\right) E\left(
u_{kr}u_{h\ell }|\mathbf{y}_{r}\right) .
\end{eqnarray*}%
The above expressions imply computation of the third and fourth central
moments of $\mathbf{u}_{r}^{\prime }\mathbf{|y}_{r}$. To simplify
computations, we approximate $E\left( \mathbf{u}_{r}\mathbf{|y}_{r}\right) $
and $E\left( \mathbf{u}_{r}\mathbf{u}_{r}^{\prime }\mathbf{|y}_{r}\right) $
by (\ref{Eu})-(\ref{Euu}). Only for these derivations we also approximate
the third and fourth central moments of $u_{r}\mathbf{|y}_{r}$ by those of $%
\bar{y}_{ir}^{\ast }\mathbf{|y}_{r}$. Let $\mathbf{M_{r}}^{-1}\mathbf{Z}_{r}^{\prime
}E\left[ \left( \mathbf{y}_{r}^{\ast }-\mathbf{X}_{r}\bm{\beta }\right) |%
\mathbf{y}^{\ast }\right] $ be a vector with generic, $i$th element given by
$E\left[ \left( \bar{y}_{ir}^{\ast }-\bm{\beta }^{\prime }\mathbf{\bar{x}%
}_{ir}\right) |\mathbf{y}_{r}\right] $. Noting that%
\begin{equation*}
\mathbf{y}_{r}^{\ast }-\mathbf{X}_{r}\bm{\beta }-\mathbf{Z}_{r}\mathbf{M_{r}}^{-1}%
\mathbf{Z}_{r}^{\prime }\left( \mathbf{y}_{r}^{\ast }-\mathbf{X}_{r}\bm{%
\beta }\right) =\left( \mathbf{I}_{N_{r}}-\mathbf{Z}_{r}\mathbf{M_{r}}^{-1}\mathbf{Z}%
_{r}^{\prime }\right) \left( \mathbf{y}_{r}^{\ast }-\mathbf{X}_{r}\bm{%
\beta }\right) ,
\end{equation*}%
$B_{\bm{\beta \beta }}$ becomes%
\begin{eqnarray*}
B_{\bm{\beta \beta }} &=&-\sum_{r=1}^{R}\mathbf{X}_{r}^{\prime }\mathbf{X%
}_{r}+\sum_{r=1}^{R}\mathbf{X}_{r}^{\prime }\left( \mathbf{I}_{N_{r}}-%
\mathbf{Z}_{r}\mathbf{M_{r}}^{-1}\mathbf{Z}_{r}^{\prime }\right) E\left[ \left(
\mathbf{y}_{r}^{\ast }-\mathbf{X}_{r}\bm{\beta }\right) \left( \mathbf{y}%
_{r}^{\ast }-\mathbf{X}_{r}\bm{\beta }\right) ^{\prime }|\mathbf{y}_{r}%
\right] \left( \mathbf{I}_{N_{r}}-\mathbf{Z}_{r}\mathbf{M_{r}}^{-1}\mathbf{Z}%
_{r}^{\prime }\right) ^{\prime }\mathbf{X}_{r} \\
&&-\sum_{r=1}^{R}\mathbf{X}_{r}^{\prime }\left( \mathbf{I}_{N_{r}}-\mathbf{Z}%
_{r}\mathbf{M_{r}}^{-1}\mathbf{Z}_{r}^{\prime }\right) E\left[ \left( \mathbf{y}%
_{r}^{\ast }-\mathbf{X}_{r}\bm{\beta }\right) |\mathbf{y}_{r}\right] E%
\left[ \left( \mathbf{y}_{r}^{\ast }-\mathbf{X}_{r}\bm{\beta }\right)
^{\prime }|\mathbf{y}_{r}\right] \left( \mathbf{I}_{N_{r}}-\mathbf{Z}%
_{r}\mathbf{M_{r}}^{-1}\mathbf{Z}_{r}^{\prime }\right) ^{\prime }\mathbf{X}_{r} \\
&=&-\sum_{r=1}^{R}\mathbf{X}_{r}^{\prime }\mathbf{X}_{r}+\sum_{r=1}^{R}%
\mathbf{X}_{r}^{\prime }\left( \mathbf{I}_{N_{r}}-\mathbf{Z}_{r}\mathbf{M_{r}}^{-1}%
\mathbf{Z}_{r}^{\prime }\right) \mathbf{\mathbf{\Sigma }}_{r}\left( \mathbf{I%
}_{N_{r}}-\mathbf{Z}_{r}\mathbf{M_{r}}^{-1}\mathbf{Z}_{r}^{\prime }\right) \mathbf{X}%
_{r}.
\end{eqnarray*}%
Note that, under our approximation, $\mathbf{u}_{r}^{\prime }\mathbf{J}^{gh}%
\mathbf{u}_{r}=\left( \mathbf{y}_{r}^{\ast }-\mathbf{X}_{r}\bm{\beta }%
\right) ^{\prime }\mathbf{A}^{gh}\left( \mathbf{y}_{r}^{\ast }-\mathbf{X}_{r}%
\bm{\beta }\right) $, and $B_{\bm{\beta }\phi _{gh}}$ becomes:%
\begin{eqnarray*}
B_{\bm{\beta }\phi _{gh}} &=&-\frac{1}{2}\sum_{r=1}^{R}\mathbf{X}%
_{r}^{\prime }\left( \mathbf{I}_{N_{r}}-\mathbf{Z}_{r}\mathbf{M_{r}}^{-1}\mathbf{Z}%
_{r}^{\prime }\right) E\left[ \left( \mathbf{y}_{r}^{\ast }-\mathbf{X}_{r}%
\bm{\beta }\right) ^{\prime }\mathbf{A}^{gh}\left( \mathbf{y}_{r}^{\ast
}-\mathbf{X}_{r}\bm{\beta }\right) \left( \mathbf{y}_{r}^{\ast }-\mathbf{%
X}_{r}\bm{\beta }\right) |\mathbf{y}_{r}\right] \\
&&+\frac{1}{2}\sum_{r=1}^{R}\mathbf{X}_{r}^{\prime }\left( \mathbf{I}%
_{N_{r}}-\mathbf{Z}_{r}\mathbf{M_{r}}^{-1}\mathbf{Z}_{r}^{\prime }\right) E\left[
\left( \mathbf{y}_{r}^{\ast }-\mathbf{X}_{r}\bm{\beta }\right) ^{\prime }%
\mathbf{A}^{gh}\left( \mathbf{y}_{r}^{\ast }-\mathbf{X}_{r}\bm{\beta }%
\right) |\mathbf{y}_{r}\right] E\left[ \left( \mathbf{y}_{r}^{\ast }-\mathbf{%
X}_{r}\bm{\beta }\right) |\mathbf{y}_{r}\right] ,
\end{eqnarray*}%
Using (\ref{ex7}) Corollary \ref{cor1}, it follows that $B_{\bm{\beta }%
\phi _{gh}}$ is:%
\begin{equation*}
B_{\bm{\beta }\phi _{gh}}=-\frac{1}{2}\sum_{r=1}^{R}\mathbf{X}%
_{r}^{\prime }\left( \mathbf{I}_{N_{r}}-\mathbf{Z}_{r}\mathbf{M_{r}}^{-1}\mathbf{Z}%
_{r}^{\prime }\right) \left[ diag(\mathbf{\Lambda }_{3}\mathbf{\Sigma }%
_{r}^{1/2}\mathbf{A}^{gh}\mathbf{\mathbf{\Sigma }}_{r}^{1/2}\mathbf{)\mathbf{%
\Sigma }}_{r}^{1/2}\mathbf{1}_{n}+2\mathbf{\mathbf{\Sigma }}_{r}\mathbf{A}%
^{gh}E\left( \left( \mathbf{y}_{r}^{\ast }-\mathbf{X}_{r}\bm{\beta }%
\right) |\mathbf{y}_{r}\right) \right] .
\end{equation*}%
We now focus on $B_{\phi _{gh}\phi _{k\ell }}$. Using (\ref{ex5})
\begin{eqnarray*}
&&E\left[ \left( \mathbf{y}_{r}^{\ast }-\mathbf{X}_{r}\bm{\beta }\right)
^{\prime }\mathbf{A}^{gh}\left( \mathbf{y}_{r}^{\ast }-\mathbf{X}_{r}\bm{%
\beta }\right) \left( \mathbf{y}_{r}^{\ast }-\mathbf{X}_{r}\bm{\beta }%
\right) ^{\prime }\mathbf{A}^{k\ell }\left( \mathbf{y}_{r}^{\ast }-\mathbf{X}%
_{r}\bm{\beta }\right) |\mathbf{y}_{r}\right] \\
&&-E\left[ \left( \mathbf{y}_{r}^{\ast }-\mathbf{X}_{r}\bm{\beta }%
\right) ^{\prime }\mathbf{A}^{gh}\left( \mathbf{y}_{r}^{\ast }-\mathbf{X}_{r}%
\bm{\beta }\right) |\mathbf{y}_{r}\right] E\left[ \left( \mathbf{y}%
_{r}^{\ast }-\mathbf{X}_{r}\bm{\beta }\right) ^{\prime }\mathbf{A}%
^{k\ell }\left( \mathbf{y}_{r}^{\ast }-\mathbf{X}_{r}\bm{\beta }\right) |%
\mathbf{y}_{r}\right] \\
&=&Tr\left[ \mathbf{\Lambda }_{4}\mathbf{\Sigma }_{r}^{1/2}\mathbf{A}^{k\ell
}\mathbf{\Sigma }_{r}^{1/2}diag\left( \mathbf{\mathbf{\Sigma }}_{r}^{1/2}%
\mathbf{A}^{gh}\mathbf{\Sigma }_{r}^{1/2}\right) \right] +2Tr\left( \mathbf{%
\Sigma }_{r}\mathbf{\mathbf{A}}^{k\ell }\mathbf{\Sigma }_{r}\mathbf{A}%
^{gh}\right) \\
&&+2\left( diag(\mathbf{\Lambda }_{3}\mathbf{\Sigma }_{r}^{1/2}\mathbf{A}%
^{gh}\mathbf{\Sigma }_{r}^{1/2}\mathbf{)1}_{n}\right) ^{\prime }\mathbf{%
\Sigma }_{r}^{1/2}\mathbf{A}^{k\ell }E\left[ \left( \mathbf{y}_{r}^{\ast }-%
\mathbf{X}_{r}\bm{\beta }\right) |\mathbf{y}_{r}\right] \\
&&+2\left( diag(\mathbf{\Lambda }_{3}\mathbf{\Sigma }_{r}^{1/2}\mathbf{A}%
^{k\ell }\mathbf{\mathbf{\Sigma }}_{r}^{1/2}\mathbf{)1}_{n}\right) ^{\prime }%
\mathbf{\Sigma }_{r}^{1/2}\mathbf{A}^{gh}E\left[ \left( \mathbf{y}_{r}^{\ast
}-\mathbf{X}_{r}\bm{\beta }\right) |\mathbf{y}_{r}\right] \\
&&4E\left[ \left( \mathbf{y}_{r}^{\ast }-\mathbf{X}_{r}\bm{\beta }%
\right) |\mathbf{y}_{r}\right] ^{\prime }\mathbf{A}^{gh}\mathbf{\Sigma }_{r}%
\mathbf{A}^{k\ell }E\left[ \left( \mathbf{y}_{r}^{\ast }-\mathbf{X}_{r}%
\bm{\beta }\right) |\mathbf{y}_{r}\right] ,
\end{eqnarray*}%
so that we obtain%
\begin{eqnarray*}
B_{\phi _{gh}\phi _{k\ell }} &=&-\frac{R}{2}Tr\left( \mathbf{\Phi }_{G}^{-1}%
\mathbf{J}^{gh}\mathbf{\Phi }_{G}^{-1}\mathbf{J}^{k\ell }\right) +\frac{1}{4}%
\sum_{r=1}^{R}2Tr\left( \mathbf{\Sigma }_{r}\mathbf{\mathbf{A}}^{k\ell }%
\mathbf{\Sigma }_{r}\mathbf{A}^{gh}\right) \\
&&+\frac{1}{4}\sum_{r=1}^{R}Tr\left[ \mathbf{\Lambda }_{4}\mathbf{\Sigma }%
_{r}^{1/2}\mathbf{A}^{k\ell }\mathbf{\Sigma }_{r}^{1/2}diag\left( \mathbf{%
\mathbf{\Sigma }}_{r}^{1/2}\mathbf{A}^{gh}\mathbf{\Sigma }_{r}^{1/2}\right) %
\right] \\
&&+\frac{1}{4}\sum_{r=1}^{R}2\left( diag(\mathbf{\Lambda }_{3}\mathbf{\Sigma
}_{r}^{1/2}\mathbf{A}^{gh}\mathbf{\Sigma }_{r}^{1/2}\mathbf{)1}_{n}\right)
^{\prime }\mathbf{\mathbf{\Sigma }}_{r}^{1/2}\mathbf{A}^{k\ell }E\left[
\left( \mathbf{y}_{r}^{\ast }-\mathbf{X}_{r}\bm{\beta }\right) |\mathbf{y%
}_{r}\right] \\
&&+\frac{1}{4}\sum_{r=1}^{R}2\left( diag(\mathbf{\Lambda }_{3}\mathbf{\Sigma
}_{r}^{1/2}\mathbf{A}^{k\ell }\mathbf{\mathbf{\Sigma }}_{r}^{1/2}\mathbf{)1}%
_{n}\right) ^{\prime }\mathbf{\mathbf{\Sigma }}_{r}^{1/2}\mathbf{A}^{gh}E%
\left[ \left( \mathbf{y}_{r}^{\ast }-\mathbf{X}_{r}\bm{\beta }\right) |%
\mathbf{y}_{r}\right] \\
&&+\frac{1}{4}\sum_{r=1}^{R}4E\left[ \left( \mathbf{y}_{r}^{\ast }-\mathbf{X}%
_{r}\bm{\beta }\right) |\mathbf{y}_{r}\right] ^{\prime }\mathbf{A}^{gh}%
\mathbf{\mathbf{\Sigma }}_{r}^{1/2}\mathbf{A}^{k\ell }E\left[ \left( \mathbf{%
y}_{r}^{\ast }-\mathbf{X}_{r}\bm{\beta }\right) |\mathbf{y}_{r}\right] .
\end{eqnarray*}
\renewcommand{\thesection}{2}
\end{document}